\providecommand{\U}[1]{\protect\rule{.1in}{.1in}}
\newtheorem{theorem}{Theorem}
\newtheorem{acknowledgement}[theorem]{Acknowledgement}
\newtheorem{definition}[theorem]{Definition}
\newtheorem{example}[theorem]{Example}
\newtheorem{lemma}[theorem]{Lemma}
\newtheorem{proposition}[theorem]{Proposition}
\newtheorem{remark}[theorem]{Remark}
\newenvironment{proof}[1][Proof]{\noindent\textbf{#1.} }{\ \rule{0.5em}{0.5em}}
\begin{document}

\title{Pointillisme \`{a} la Signac and Construction of a Quantum Fiber Bundle Over
Convex Bodies}
\author{Maurice de Gosson\thanks{Corresponding author: maurice.de.gosson@univie.ac.at}
and Charlyne de Gosson\\University of Vienna\\Faculty of Mathematics (NuHAG)\\Oskar-Morgenstern-Platz 1\\1090 Vienna AUSTRIA}
\maketitle
\tableofcontents

\begin{abstract}
We use the notion of polar duality from convex geometry and the theory of
Lagrangian planes from symplectic geometry to construct a fiber bundle over
ellipsoids that can be viewed as a quantum-mechanical substitute for the
classical symplectic phase space. The total space of this fiber bundle
consists of geometric quantum states, products of convex bodies carried by
Lagrangian planes by their polar duals with respect to a second transversal
Lagrangian plane.. Using the theory of the John ellipsoid we relate these
geometric quantum states to the notion of \textquotedblleft quantum
blobs\textquotedblright\ introduced in previous work; quantum blobs are the
smallest symplectic invariant regions of the phase space compatible with the
uncertainty principle. We show that the set of equivalence classes of
unitarily related geometric quantum states is in a one-to-one correspondence
with the set of all Gaussian wavepackets.

\end{abstract}

\textbf{Keywords}: Lagrangian frame; symplectic group; polar duality; Gaussian
wavepackets; Wigner transform; quantum fiber bundle

\section{Introduction}

\subsection{Pointillisme \`{a} la Signac and phase space pixels}

In two brilliant publications \cite{Butter1,Butter2} Jeremy Butterfield
dismisses what he calls \textit{pointillisme, }that is the view that
mathematical \textit{points} make sense in physics. We totally agree with
Butterfield's views and assume in this paper that the basic elements of
configuration space (\textit{i.e.} physical space, and its multi-dimensional
extensions) are infinitesimal regions with non-zero volume. Indeed, in
practice we can never experimentally determine a point in physical space with
absolute precision; as Gazeau \cite{ga18} humorously notes

\begin{quotation}
\textquotedblleft\textit{Nothing is mathematically exact from the physical
point of view}\textquotedblright.
\end{quotation}

In fact the notion of point-like particle is a mathematical abstraction, which
we can (in principle) approximate with arbitrary accuracy. However, these
regions cannot be made arbitrarily small, because the uncertainty principle
would then lead to violations of special relativity (at least for massive
particles) since in the limit $\Delta x\rightarrow0$ the Heisenberg relation
$\Delta p\Delta x\sim\hbar$ leads to values of $\Delta p$ exceeding the speed
of light. Our view in a sense restores pointillisme as meant by the
neo-impressionist painter Paul Signac, who used small, distinct dots of color
which he applied in patterns to form an image. We will show that this coarse
graining\ of the usual configuration space leads, using an extended version of
the geometric notion of polar duality, to a fiber bundle which can be viewed
as a substitute for a quantum phase space. Admittedly, the term
\textquotedblleft quantum phase space\textquotedblright\ is usually perceived
as a heresy in the physics community: there can't be any phase space in
quantum mechanics since the notion of a well-defined point does not make sense
because of the uncertainty principle. Dirac himself dismissed in 1945 in a
letter to Moyal (in \cite{moyalann}), even the suggestion that quantum
mechanics can be expressed in terms of classical-valued phase space variables.
Of course, as we know, Dirac was wrong, since the Wigner--Moyal--Weyl
formalism, which deals with functions and operators defined on classical phase
space, is one of the most powerful tools for expressing the laws of quantum
mechanics. Still, the concept of \textit{quantum phase space} itself is
ambiguous, to say the least; the aim of this paper is to propose a substitute,
which is a collection of fiber bundles. The simplest of these is the
\textquotedblleft canonical bundle\textquotedblright\
\begin{equation}
\pi_{\mathrm{can}}:\operatorname*{Quant}(n)\longrightarrow\operatorname*{Conv}%
(n)\label{fib}%
\end{equation}
where $\operatorname*{Conv}(n)$ is the set of convex bodies in configuration
space $\mathbb{R}_{x}^{n}$; the fiber over $X\in\operatorname*{Conv}(n)$
consists of the Cartesian products $X\times X^{\hbar}(x_{0})$ where $X^{\hbar
}(p_{0})$ is the polar dual of $X$ centered at $p_{0}\in\mathbb{R}_{p}^{n}$.
For instance%
\[
\pi^{-1}(B_{X}^{n}(x_{0}\sqrt{\hbar}))=\left\{  B_{X}^{n}(x_{0}\sqrt{\hbar
})\times B_{P}^{n}(p_{0},\sqrt{\hbar}):p_{0}\in\mathbb{R}_{p}^{n}\right\}
\]
where $B_{X}^{n}(x_{0},\sqrt{\hbar})$ and $B_{P}^{n}(p_{0},\sqrt{\hbar})$ are
balls with radius $\sqrt{\hbar}$ centered at $x_{0}$ and $p_{0}$; this
reduces, in the limit $\hbar\rightarrow0$, to the products $\{x_{0}%
\}\times\mathbb{R}_{p}^{n}$. We will draw several consequences from these
definitions. In particular we will see that if we restrict the base space of
the fiber bundle (\ref{fib}) to ellipsoids, then we have a continuous action
of the unitary group $U(n,\mathbb{C})$ on $\operatorname*{Quant}(n)$ and that
the homogeneous space $\operatorname*{Quant}(n)/U(n,\mathbb{C})$ can be
identified with the set $\operatorname*{Gauss}(n)$ of all generalized Gaussian
wavepackets on $\mathbb{R}_{x}^{n}$.%

%TCIMACRO{\FRAME{ftbpF}{3.0331in}{2.3725in}{0in}{}{}%
%{paul-signac-samois-study-no.-6.jpg}{\special{ language "Scientific Word";
%type "GRAPHIC";  maintain-aspect-ratio TRUE;  display "USEDEF";
%valid_file "F";  width 3.0331in;  height 2.3725in;  depth 0in;
%original-width 2.9503in;  original-height 2.3026in;  cropleft "0";
%croptop "1";  cropright "1";  cropbottom "0";
%filename 'Paul-Signac-Samois-Study-No.-6.jpg';file-properties "XNPEU";}} }%
%BeginExpansion
\begin{figure}[ptb]%
\centering
\includegraphics[
natheight=2.302600in,
natwidth=2.950300in,
height=2.3725in,
width=3.0331in
]%
{Paul-Signac-Samois-Study-No.-6.jpg}%
\end{figure}
%EndExpansion

\subsection{Description of the method: heuristics}

The aim of the present paper is to study, for an arbitrary number $n$ of
degrees of freedom, the properties of such \textquotedblleft quantum
state\textquotedblright\ and to relate them to the theory of Gaussian
wavepackets; our study will unveil unexpected and beautiful geometric
properties of quantum mechanics.

\subsection{Toolbox and terminology}

We introduced in \cite{gopolar} the geometric notion of Lagrangian polar
duality in connection with the uncertainty principle of quantum mechanics; in
a recent paper \cite{bullsci} we have detailed this results and given a
rigorous mathematical study of this notion. As pointed out in \cite{gopolar}
the underlying idea is that a quantum system localized in the position
representation in a set $X$ cannot be localized in the momentum representation
in a set smaller than its polar dual $X^{\hbar}$; this is a geometric form of
the uncertainty principle, independent of the notion of variance or
covariance. Let us explain this a little bit more in detail. We live in a
three-dimensional world where the state of a classical particle is described
by its position vector $(x,y,z)$ and by the vector of conjugate momenta
$(p_{x},p_{y},p_{z})$, both at a given time $t$. This extends to many particle
systems by introducing the generalized position and momentum vectors
$x=(x_{1},...,x_{n})$ and $p=(p_{1},...,p_{n})$, and the phase space of that
system is by definition the space $\mathbb{R}_{x}^{n}\times\mathbb{R}_{p}%
^{n}\equiv\mathbb{R}^{2n}$ of all $(x,p)$. This way of writing things
explicitly singles out the two subspaces $\ell_{X}=\mathbb{R}_{x}^{n}\times0$
and $\ell_{P}=0\times\mathbb{R}_{p}^{n}$; however, as is already clear in
classical (Hamiltonian) mechanics this \textquotedblleft
canonical\textquotedblright\ choice of frame $(\ell_{X},\ell_{P})$ has no
reason to be privileged, and one can choose any other coordinate spaces to
work with as long as these are obtained by symplectic transformations from the
frame $(\ell_{X},\ell_{P})$. Such transformations will not take $\ell_{X}$ and
$\ell_{P}$ to arbitrary $n$-dimensional linear subspaces of $\mathbb{R}^{2n}$,
but rather to \emph{Lagrangian planes} which have the property that the
canonical symplectic form on $\mathbb{R}^{2n}$ vanishes identically on them.
These subspaces play a central role in classical mechanics (they are the
tangent spaces of the invariant tori of the integrable Hamiltonian systems
\cite{Arnold}). Consider now a convex compact set $X_{\ell}$ with non-empty
interior (for instance an ellipsoid) carried by a Lagrangian plane $\ell$. if,
for instance, $\ell=\ell_{X}$ this convex body $X_{\ell}$ can be physically
interpreted as a cloud of points in configuration space corresponding to a
sequence of measurements. Assuming, for simplicity, that $X_{\ell}$ is
centered at the origin, we next choose a second arbitrary Lagrangian plane
$\ell^{\prime}$ transversal to $\ell$ and define the polar dual $X_{\ell
^{\prime}}^{\hbar}$of $X_{\ell}$ with respect to $\ell^{\prime}$ as being the
set of all phase space points $z^{\prime}=(x,p^{\prime})$ such that
$\omega(z,z^{\prime})\leq\hbar$ for every $z=(x,p)$ in $X_{\ell}$. An
elementary argument shows that $X_{\ell^{\prime}}^{\hbar}$ is also a convex
set (and in particular an ellipsoid if $X_{\ell}$ is). We will call the subset
$X_{\ell}\times X_{\ell^{\prime}}^{\hbar}$ of $\mathbb{R}^{2n}$ a \emph{pure
quantum state}. Admittedly, this definition of a quantum state is rather
abstract. The reason will become clear to the reader in the course of this
article, but there is a rather immediate (although hidden) motivation. It
turns out that the Cartesian product $X_{\ell}\times X_{\ell^{\prime}}^{\hbar
}$ is always a convex set (because $X_{\ell}$ and $X_{\ell^{\prime}}^{\hbar}$
are convex). As such it contains a unique maximum volume ellipsoid $\Omega$
(the \textquotedblleft John ellipsoid\textquotedblright), and this ellipsoid
is what we have called elsewhere \cite{blob,goluPR} a \emph{quantum blob},
that is the image of a phase space ball with radius $\sqrt{\hbar}$ by a
symplectic transformation. As we have shown in \cite{go09,goluPR} these
quantum blobs represent the smallest phase space units compatible with the
uncertainty (or indeterminacy) principle of quantum mechanics. In particular,
a quantum blob can always (via the theory of the Wigner transform) be viewed
as the covariance ellipsoid of a generalized Gaussian state.

Here is a basic example. Suppose that the configuration space is the $x$ axis,
in which case the classical phase space is just the $x,p$ plane. The pseudo
quantum phase space consists of parallelograms $X_{\ell}\times X_{\ell}%
^{\hbar}$ where $\ell$ and $\ell^{\prime}$ are two lines in the the $x,p$
plane, $X_{\ell}$ is an interval in $\ell$ and $X_{\ell}^{\hbar}$ is the polar
dual of $X_{\ell}$ with respect to $\ell^{\prime}$. The latter is the set of
points $z^{\prime}$ on $\ell^{\prime}$ such that
\[
\omega(z^{\prime},z)=-%
\begin{vmatrix}
x^{\prime} & x\\
p^{\prime} & p
\end{vmatrix}
\leq\hbar
\]
for all $z=(x,p)$ on $\ell$. If $\ell$ is the $x$-axis and $\ell^{\prime}$ the
$p$-axis this condition becomes $p^{\prime}x\leq\hbar$ so $X_{\ell}^{\hbar}$
is the usual polar dual from convex geometry \cite{gopolar}. Choosing
$X_{\ell_{X}}=[-\sqrt{\hbar/a},\sqrt{\hbar/a}]$ for some $a>0$ we have
$X_{\ell_{P}}^{\hbar}=[-\sqrt{a\hbar},\sqrt{a\hbar}]$ so that $X_{\ell_{X}%
}\times X_{\ell_{P}}^{\hbar}$ is a parallelogram with area $4\hbar$ centered
at the origin. Now, the largest ellipse contained in that parallelogram is the
one with axes $X_{\ell_{X}}$ and $X_{\ell_{P}}^{\hbar}$ and thus has area
$\pi\hbar$. To such an ellipse corresponds (via the theory of the Wigner
transform) a unique (normalized) Gaussian wavepacket, namely%
\[
\psi(x)=\left(  \tfrac{a}{\pi\hbar}\right)  ^{1/4}e^{-ax^{2}/\hbar}%
\]
which is a minimum uncertainty wavepacket. To our \textquotedblleft quantum
state\textquotedblright\ $X_{\ell_{X}}\times X_{\ell_{P}}^{\hbar}$ thus
corresponds a basic object from quantum mechanics (a Gaussian wavepacket), but
is a more general object than just this wavepacket.

\subsubsection*{Notation}

The configuration space of a system with $n$ degrees of freedom will in
general be written $\ell_{X}=\mathbb{R}_{x}^{n}$, and its dual (the momentum
space) $\ell_{P}=\mathbb{R}_{p}^{n}$. The position variables will be written
$x=(x_{1},...,x_{n})$ and the momentum variables $p=(p_{1},...,p_{n})$. The
classical phase space $\mathbb{R}_{x}^{n}\times\mathbb{R}_{p}^{n}$ is
identified with $\mathbb{R}^{2n}$ equipped with the inner product $p\cdot
x=p_{1}x_{1}+\cdot\cdot\cdot+p_{n}x_{n}$ and with the standard symplectic form
$\omega$ defined by $\omega(z,z^{\prime})=p\cdot x^{\prime}-p^{\prime}\cdot x$
if $z=(x,p)$, $z^{\prime}=(x^{\prime},p^{\prime})$.

\section{Some Symplectic Geometry}

\subsection{The symplectic group $\operatorname*{Sp}(n)$}

The standard symplectic form $\omega$ on $\mathbb{R}_{z}^{2n}\equiv
\mathbb{R}_{x}^{n}\times\mathbb{R}_{p}^{n}$ can be written in matrix form as
\[
\omega(z,z^{\prime})=Jz\cdot z^{\prime}=(z^{\prime})^{T}Jz
\]
where $J$ is the standard symplectic matrix:%
\[
J=%
\begin{pmatrix}
0_{n\times n} & I_{n\times n}\\
-I_{n\times n} & 0_{n\times n}%
\end{pmatrix}
.
\]
The associated symplectic group $\operatorname*{Sp}(n)$ consists of all linear
automorphisms $S$ of $\mathbb{R}_{z}^{2n}$ preserving the symplectic form:
$\omega(Sz,Sz^{\prime})=\omega(z,z^{\prime})$ for all vectors $z,z^{\prime}$.
The symplectic automorphisms will be identified with their matrices in the
canonical basis; with this convention $S\in\operatorname*{Sp}(n)$ if and only
it satisfies one of the equivalent identities $S^{T}JS=J$ or $SJS^{T}=J$.
These relations imply \cite{Birk} that a real $2n\times2n$ matrix written in
the block form
\begin{equation}
S=%
\begin{pmatrix}
A & B\\
C & D
\end{pmatrix}
\label{ABCD}%
\end{equation}
is symplectic if and only if the $n\times n$ blocks $A,B,C,D$ satisfy the sets
of equivalent conditions
\begin{align}
A^{T}C\text{, }B^{T}D\text{ \ \textit{symmetric, and} }A^{T}D-C^{T}B  &
=I_{n\times n}\label{cond1}\\
AB^{T}\text{, }CD^{T}\text{ \ \textit{symmetric, and} }AD^{T}-BC^{T}  &
=I_{n\times\times n}. \label{cond2}%
\end{align}
It follows that the inverse of $S\in\operatorname*{Sp}(n)$ has the simple form%
\begin{equation}
S^{-1}=%
\begin{pmatrix}
D^{T} & -B^{T}\\
-C^{T} & A^{T}%
\end{pmatrix}
. \label{inverse}%
\end{equation}

The affine (or inhomogeneous) symplectic group is the semi-direct product
\begin{equation}
\operatorname*{ISp}(n)=\operatorname*{Sp}(n)\ltimes\mathbb{R}^{2n};
\label{ispn}%
\end{equation}
it consists of all products $ST(z_{0})=T(Sz_{0})S$ where $S\in
\operatorname*{Sp}(n)$ and $T(z_{0})$ is the translation operator
$z\longmapsto z+z_{0}$ in $\mathbb{R}^{2n}$.

Recall \cite{Birk} that the metaplectic group $\operatorname*{Mp}(n)$ is the
unitary representation on $L^{2}(\mathbb{R}_{x}^{n})$ of the double cover of
the symplectic group $\operatorname*{Sp}(n)$. It is generated by the unitary
operators $\widehat{J}$, $\widehat{V}_{P}$, and $\widehat{M}_{L.m}$ defined in
the table below, where we denote $\pi^{\operatorname*{Mp}}$ the projection
$\operatorname*{Mp}(n)\longrightarrow\operatorname*{Sp}(n).$

\begin{center}%
\begin{tabular}
[c]{|l|l|l|}\hline
$\widehat{J}\psi(x)=\left(  \tfrac{1}{2\pi i\hbar}\right)  ^{n/2}\int
e^{-\frac{1}{\hbar}x\cdot x^{\prime}}\psi(x^{\prime})d^{n}x^{\prime}$ &
$\overset{\pi^{\operatorname*{Mp}}}{\longrightarrow}$ & $J=%
\begin{pmatrix}
0 & I\\
-I & 0
\end{pmatrix}
$\\\hline
$\widehat{V}_{P}\psi(x)=e^{-\frac{i}{2\hbar}Px\cdot x}\psi(x)$ &
$\overset{\pi^{\operatorname*{Mp}}}{\longrightarrow}$ & $V_{P}=%
\begin{pmatrix}
I & 0\\
-P & I
\end{pmatrix}
$\\\hline
$\widehat{M}_{L.m}\psi(x)=i^{m}\sqrt{|\det L|}\psi(Lx)$ & $\overset{\pi
^{\operatorname*{Mp}}}{\longrightarrow}$ & $M_{L}=%
\begin{pmatrix}
L^{-1} & 0\\
0 & L^{T}%
\end{pmatrix}
.$\\\hline
\end{tabular}

\end{center}

In the last line of this table the integer $m$ is defined modulo 4 and
corresponds to a choice of the argument of the determinant $\det L$,
reflecting the fact that $\operatorname*{Mp}(n)$ is a double covering of
$\operatorname*{Sp}(n)$. For a complete study of $\operatorname*{Mp}(n)$ and
its properties we refer to \cite{Birk}. The non-homogeneous analogue of
$\operatorname*{Mp}(n)$ is denoted $\operatorname*{IMp}(n)$; it consists of
all operators $\widehat{S}\widehat{T}(z_{0})=\widehat{T}(Sz_{0})\widehat{S}$
where $\widehat{S}\in\operatorname*{Mp}(n)$, $z_{0}\in\mathbb{R}^{2n}$, and
$\widehat{T}(z_{0})$ is the Heisenberg displacement operator:%
\[
\widehat{T}(x_{0},p_{0})\psi(x)=e^{\frac{i}{\hbar}(p_{0}\cdot x-\frac{1}%
{2}p_{0}\cdot x_{0})}\psi(x-x_{0}).
\]
The natural projection $\operatorname*{IMp}(n)\longrightarrow
\operatorname*{ISp}(n)$ is defined by $\widehat{S}\widehat{T}(z_{0}%
)\longmapsto ST(z_{0})$.

\subsection{Lagrangian planes and frames}

When $n=1$ the symplectic form is, up to the sign, the determinant function:
$\omega(z,z^{\prime})=-\det(z,z^{\prime})$. It follows that $\omega
(z,z^{\prime})=0$ when $z$ and $z^{\prime}$ are colinear: the symplectic form
vanishes along all lines through the origin. The notion of Lagrangian plane
generalizes this property to arbitrary dimension $n$: a linear subspace $\ell$
of $\mathbb{R}^{2n}$ equipped with its symplectic form $\omega$ is called a
\emph{Lagrangian plane} if $\dim\ell=n$ and $\omega(z,z^{\prime})=0$ for all
$z,z^{\prime}\in\ell$.

The most typical (but not most general) example of Lagrangian planes is given
by the \textquotedblleft coordinate Lagrangian planes\textquotedblright. They
are obtained by picking out in the $2n$-vector $z=(x_{1},...,x_{n}%
;p_{1},...,p_{n})$ exactly $n$ non-conjugate coordinates. For instance the set
of all $(x_{1},...,x_{k},p_{k+1},...,p_{n})$ for $k<n$ are the coordinates of
a Lagrangian plane in $\mathbb{R}^{2n}$.

The subspaces consisting of all $z=(x,p)$ such that $p=Ax$ for some symmetric
matrix $A$ is a Lagrangian plane: it has dimension $n$ and
\[
\omega(x,Ax;x^{\prime},Ax^{\prime})=Ax\cdot x^{\prime}-Ax^{\prime}\cdot x=0
\]
since $A$ is symmetric. More generally, a subspace $\ell$ of $\mathbb{R}^{2n}$
is a Lagrangian plane if and only we have
\[
(x,p)\in\ell\text{ if and only }Ax+Bp=0.
\]
where $A$ and $B$ are real $n\times n$ matrices satisfying one of the
following sets of equivalent conditions
\begin{align*}
A^{T}B  &  =B^{T}A\text{ \textit{and} }A^{T}A+B^{T}B=I_{n\times n}\\
AB^{T}  &  =BA^{T}\text{\textit{ and} }AA^{T}+BB^{T}=I_{n\times n}.
\end{align*}

The set of all Lagrangian planes in the symplectic space $(\mathbb{R}%
^{2n},\omega)$ is called the \textit{Lagrangian Grassmannian} and is denoted
by $\operatorname*{Lag}(n)$.

\begin{remark}
There is an alternative way of interpreting Lagrangian planes as subspaces of
$\mathbb{C}^{2n}$ on which the inner product $(z,z^{\prime})\longmapsto
z\cdot(z^{\prime})^{\ast}$is real. In fact, the symplectic product
$\omega(z,z^{\prime})$ can be written as $\omega(z,z^{\prime}%
)=\operatorname{Im}(z\cdot(z^{\prime})^{\ast})$ when $z=(x,p)$ and $z^{\prime
}=(x^{\prime},p^{\prime})$ are identified with the complex vectors $x+ip$ and
$x^{\prime}+ip^{\prime}$ in $\mathbb{C}^{n}$. Lagrangian planes then
correspond to the $n$-dimensional subspaces for which $z\cdot(z^{\prime
})^{\ast}$ is a real number.
\end{remark}

In the phase plane $\mathbb{R}^{2}$ every line through the origin can be taken
to any other such line using a rotation. There is a similar property in
arbitrary dimension $n$. A symplectic automorphism $U$ is called a
\textit{symplectic} \textit{rotation} if $U\in\operatorname*{Sp}(n)\cap
O(2n,\mathbb{R})$ where $O(2n,\mathbb{R})$ is the usual orthogonal group. In
the case $n=1$ this is just the usual rotation group $SO(2n,\mathbb{R})$. We
denote by $U(n)$ the group of all symplectic rotations; one shows \cite{Birk}
that $U(n)$ is the image in $\operatorname*{Sp}(n)$ of the complex unitary
group $U(n,\mathbb{C})$ by the embedding%
\[
\iota:A+iB\longmapsto%
\begin{pmatrix}
A & B\\
-B & A
\end{pmatrix}
.
\]
A matrix $%
\begin{pmatrix}
A & B\\
-B & A
\end{pmatrix}
$ is thus a symplectic rotation if and only if the blocks $A$ and $B$ satisfy
the conditions%
\begin{align}
A^{T}B  &  =B^{T}A\text{ \ and }A^{T}A+B^{T}B=I\label{uni1}\\
AB^{T}  &  =BA^{T}\text{ \ and }AA^{T}+BB^{T}=I \label{uni2}%
\end{align}
in view of (\ref{cond1}), (\ref{cond2}).

Let $\ell$ be a Lagrangian plane in $(\mathbb{R}^{2n},\omega)$: $\ell
\in\operatorname*{Lag}(n)$. For every symplectic transformation $S\in
\operatorname*{Sp}(n)$ the image $S\ell$ is also a Lagrangian plane: we
clearly have $\dim S\ell=n$ and $\omega(Sz,Sz^{\prime})=\omega(z,z^{\prime
})=0$ for all $z,z^{\prime}\in\ell$. We thus have a natural group action
\begin{equation}
\operatorname*{Sp}(n)\times\operatorname*{Lag}(n)\longrightarrow
\operatorname*{Lag}(n) \label{splag}%
\end{equation}
which induces, by restriction, an action%
\begin{equation}
U(n)\times\operatorname*{Lag}(n)\longrightarrow\operatorname*{Lag}(n).
\label{ulag}%
\end{equation}
An essential property is the transitivity of these actions.

\begin{proposition}
\label{PropU}The subgroup $U(n)$ of $\operatorname*{Sp}(n)$ (and hence
$\operatorname*{Sp}(n)$ itself) acts transitively on the Lagrangian
Grassmannian $\operatorname*{Lag}(n)$: for any pair $(\ell,\ell^{\prime})$ of
Lagrangian planes in $(\mathbb{R}^{2n},\omega)$ there exists $U\in U(n)$ such
that $\ell^{\prime}=U\ell$. In particular every $\ell\in\operatorname*{Lag}%
(n)$ can be obtained from $\ell_{X}$ (or from $\ell_{P}$) using a symplectic rotation.
\end{proposition}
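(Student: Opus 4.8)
The plan is to prove the ``in particular'' assertion first — that every $\ell\in\operatorname*{Lag}(n)$ equals $U\ell_X$ for some symplectic rotation $U$ — and then to deduce the transitivity statements by a one-line group-theoretic composition. The entire argument rests on a single structural fact about the standard matrix $J$: it lies simultaneously in $O(2n,\mathbb{R})$ (since $J^T=-J$ and $J^2=-I$ give $J^TJ=I$) and in $\operatorname*{Sp}(n)$ (since $J^TJJ=J$), and it carries any Lagrangian plane into its Euclidean orthogonal complement. Concretely, first I would fix $\ell$ and apply Gram--Schmidt, for the ordinary Euclidean inner product, to any basis of $\ell$, obtaining an orthonormal family $e_1,\dots,e_n$ spanning $\ell$. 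The key observation is that the vectors $Je_1,\dots,Je_n$ are then orthogonal to $\ell$, because $Je_i\cdot e_j=\omega(e_i,e_j)=0$ as $\ell$ is Lagrangian; since $J\in O(2n,\mathbb{R})$ the $Je_i$ are again orthonormal, so $\{e_1,\dots,e_n,Je_1,\dots,Je_n\}$ is an orthonormal basis of $\mathbb{R}^{2n}$. A short computation shows it is moreover a symplectic basis: $\omega(e_i,e_j)=0$, $\omega(Je_i,Je_j)=0$, and $\omega(e_i,Je_j)=Je_i\cdot Je_j=e_i\cdot e_j=\delta_{ij}$.

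Second, I would run the identical construction on $\ell_X$, whose standard coordinate basis $f_1,\dots,f_n$ is already orthonormal, producing the canonical orthonormal symplectic basis $\{f_1,\dots,f_n,Jf_1,\dots,Jf_n\}$ (with the $Jf_i$ spanning $\ell_P$). I then define the linear map $U$ on bases by $Uf_i=e_i$ and $U(Jf_i)=Je_i$. Because $U$ sends one orthonormal basis to another it lies in $O(2n,\mathbb{R})$; because the two bases obey the \emph{same} symplectic relations recorded above, $U$ preserves $\omega$ on every pair of basis vectors, hence on all of $\mathbb{R}^{2n}$ by bilinearity, so $U\in\operatorname*{Sp}(n)$. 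Therefore $U\in\operatorname*{Sp}(n)\cap O(2n,\mathbb{R})=U(n)$, and by construction $U\ell_X=\operatorname*{span}(e_1,\dots,e_n)=\ell$.

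Finally, transitivity is immediate: given $\ell,\ell'\in\operatorname*{Lag}(n)$, choose $U,U'\in U(n)$ with $U\ell_X=\ell$ and $U'\ell_X=\ell'$; then $U'U^{-1}\in U(n)$ satisfies $(U'U^{-1})\ell=\ell'$. The claim for $\ell_P$ follows the same way once one notes $\ell_P=J\ell_X$ with $J\in U(n)$. The one point that genuinely requires care — and which is the conceptual core of the proof — is the assertion $J\ell\perp\ell$, i.e. that applying the complex structure to a Lagrangian plane lands in its Euclidean orthogonal complement; this is precisely the translation of the Lagrangian condition $\omega|_{\ell}=0$ into orthogonality via $\omega(z,z')=Jz\cdot z'$, and it is what makes $\{e_i,Je_i\}$ simultaneously orthonormal \emph{and} symplectic. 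Everything else is routine linear algebra. As an alternative, the Remark's identification of $\mathbb{R}^{2n}$ with $\mathbb{C}^n$ gives a one-line version of the same idea: a Lagrangian plane is a real $n$-dimensional subspace on which the Hermitian form is real, an orthonormal real basis of it is automatically a unitary basis of $\mathbb{C}^n$, and the associated element of $U(n,\mathbb{C})$ carries $\mathbb{R}^n\cong\ell_X$ onto $\ell$.
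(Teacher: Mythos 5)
Your proof is correct and follows essentially the same route as the paper: both arguments extend an orthonormal basis $\{e_1,\dots,e_n\}$ of a Lagrangian plane by $\{Je_1,\dots,Je_n\}$ to obtain a basis of $\mathbb{R}^{2n}$ that is simultaneously orthonormal and symplectic, and then define $U$ as the linear map sending one such basis to another, so that $U\in\operatorname*{Sp}(n)\cap O(2n,\mathbb{R})=U(n)$. The only differences are cosmetic: you pass through the base point $\ell_X$ and compose $U'U^{-1}$ instead of mapping between two arbitrary planes directly, and you write out the verification (in particular $J\ell\perp\ell$, via $\omega(z,z')=Jz\cdot z'$) that the paper simply asserts with a citation.
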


\begin{proof}
This is proven as follows \cite{Birk}: let $\mathcal{B}=\{e_{1},...,e_{n}\}$
and $\mathcal{B}^{\prime}=\{e_{1}^{\prime},...,e_{n}^{\prime}\}$ be
orthonormal bases of $\ell$ and $\ell^{\prime}$, respectively. Then
$\mathcal{B}\cup J\mathcal{B}$ and $\mathcal{B}^{\prime}\cup J\mathcal{B}%
^{\prime}$ are bases of $\mathbb{R}^{2n}$ which are both orthogonal and
symplectic. Let $U$ be a linear mapping taking $\mathcal{B}\cup J\mathcal{B}$
to $\mathcal{B}^{\prime}\cup J\mathcal{B}^{\prime}$; we then have
$\ell^{\prime}=U\ell$ and $U\in\operatorname*{Sp}(n)\cap O(2n,\mathbb{R})$.
\end{proof}

The action (\ref{ulag}) allows to endow $\operatorname*{Lag}(n)$ with a
topology, using the theory of homogeneous spaces. In fact, the subgroup $O(n)$
of $U(n)$ consisting of all symplectic matrices
\[
R=%
\begin{pmatrix}
A & 0\\
0 & A
\end{pmatrix}
\text{ \ , \ }A\in O(n,\mathbb{R})
\]
stabilizes $\ell_{P}$ (that is, $R\ell_{P}=\ell_{P}$) hence there is a natural
bijection
\[
U(n)/O(n)\equiv U(n,\mathbb{C})/O(n,\mathbb{R})\longrightarrow
\operatorname*{Lag}(n)
\]
which allows to identify topologically the coset space $U(n)/O(n)$ with the
Lagrangian Grassmannian (see \cite{Birk} for technical details).

Let $(\ell,\ell^{\prime})$ be a pair of Lagrangian planes in $(\mathbb{R}%
^{2n},\omega)$ such that $\ell\cap\ell^{\prime}=0$. Since the dimensions of
$\ell$ and $\ell^{\prime}$ are $n$ this is equivalent to $\ell\oplus
\ell^{\prime}=\mathbb{R}^{2n}$. We will call $(\ell,\ell^{\prime})$ a
\emph{Lagrangian frame}. We will use the notation
\begin{equation}
\ell_{X}=\mathbb{R}_{x}^{n}\times0\text{ \ and \ }\ell_{P}=0\times
\mathbb{R}_{p}^{n}\label{canonframe}%
\end{equation}
and call the spaces $\ell_{X}$ and $\ell_{P}$ the \textit{position} and
\textit{momentum planes}; Clearly $(\ell_{X},\ell_{P})$ is a Lagrangian frame
(we will call it the \textquotedblleft canonical frame\textquotedblright). We
will denote the space of all Lagrangian frames $\operatorname*{Lag}%
\nolimits_{0}^{2}(n)$. Thus:%
\begin{equation}
\operatorname*{Lag}\nolimits_{0}^{2}(n)=\{(\ell,\ell^{\prime})\in
\operatorname*{Lag}\nolimits^{2}(n):\ell\cap\ell^{\prime}=\{0\}\}\label{lagon}%
\end{equation}
where $\operatorname*{Lag}\nolimits^{2}(n)$ denotes the Cartesian product
$\operatorname*{Lag}(n)\times\operatorname*{Lag}(n)$.

A crucial property is that the symplectic group $\operatorname*{Sp}(n)$ acts
transitively on the set of all Lagrangian frames \cite{Birk}. Because of the
importance of this result we prove it here:

\begin{proposition}
\label{Prop1}The group $\operatorname*{Sp}(n)$ acts transitively on the set of
all Lagrangian frames: if $(\ell_{1},\ell_{1}^{\prime})$ and $(\ell_{2}%
,\ell_{2}^{\prime})$ are in $\operatorname*{Lag}\nolimits_{0}^{2}(n)$ then
there exits $S\in\operatorname*{Sp}(n)$ such that $(\ell_{2},\ell_{2}^{\prime
})=(S\ell_{1},S\ell_{1}^{\prime})$.
\end{proposition}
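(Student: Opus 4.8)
The plan is to reduce the general statement to the special case of moving an arbitrary frame to the canonical one. Suppose we can show that for every Lagrangian frame $(\ell,\ell')$ there is a symplectic map sending the canonical frame $(\ell_X,\ell_P)$ to $(\ell,\ell')$. If $S_1\in\operatorname*{Sp}(n)$ sends $(\ell_X,\ell_P)$ to $(\ell_1,\ell_1')$ and $S_2\in\operatorname*{Sp}(n)$ sends $(\ell_X,\ell_P)$ to $(\ell_2,\ell_2')$, then $S=S_2S_1^{-1}$ is symplectic and sends $(\ell_1,\ell_1')$ to $(\ell_2,\ell_2')$, which is exactly what is required. So it suffices to treat the canonical target.

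The heart of the argument is the construction of a symplectic basis adapted to the frame. First I would observe that the restriction of $\omega$ to the product $\ell\times\ell'$ is a non-degenerate bilinear pairing. This is the only place where both defining properties of a frame are used: if $z\in\ell$ satisfies $\omega(z',z)=0$ for all $z'\in\ell'$, then since $\ell$ is Lagrangian we also have $\omega(w,z)=0$ for all $w\in\ell$; as $\mathbb{R}^{2n}=\ell\oplus\ell'$ this forces $\omega(\cdot,z)$ to vanish identically on $\mathbb{R}^{2n}$, whence $z=0$ because $\omega$ is non-degenerate. The same argument applies with the roles of $\ell$ and $\ell'$ interchanged.

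Using this non-degeneracy I would pick any basis $f_1,\dots,f_n$ of $\ell'$ and then invoke the induced isomorphism $\ell\to(\ell')^{\ast}$, $z\mapsto\omega(\cdot,z)|_{\ell'}$, to produce the unique dual basis $e_1,\dots,e_n$ of $\ell$ characterized by $\omega(f_i,e_j)=\delta_{ij}$. Because $\ell$ and $\ell'$ are Lagrangian we automatically have $\omega(e_i,e_j)=0$ and $\omega(f_i,f_j)=0$, so the $2n$ vectors $e_1,\dots,e_n,f_1,\dots,f_n$ obey exactly the same $\omega$-relations as the canonical basis of $\mathbb{R}^{2n}$, i.e. their Gram matrix with respect to $\omega$ equals $J$. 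Let $S$ be the linear automorphism of $\mathbb{R}^{2n}$ sending the canonical basis vectors of $\ell_X$ to $e_1,\dots,e_n$ and those of $\ell_P$ to $f_1,\dots,f_n$. Since $S$ carries a symplectic basis to a symplectic basis, it preserves $\omega$ on basis vectors and hence, by bilinearity, everywhere, so $S\in\operatorname*{Sp}(n)$; and by construction $S\ell_X=\ell$ and $S\ell_P=\ell'$.

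I expect the only genuine obstacle to be the non-degeneracy step, since this is where the interplay between the Lagrangian condition and transversality is essential; once it is in place, the existence of the dual basis and the verification that the basis-change map is symplectic are routine linear algebra. A minor point to handle carefully is the sign convention, ensuring the normalization $\omega(f_i,e_j)=\delta_{ij}$ (rather than the opposite sign) so that the Gram matrix of the adapted basis matches $J$ exactly and $S$ comes out genuinely symplectic.
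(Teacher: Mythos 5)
Your proof is correct and takes essentially the same route as the paper: both arguments construct symplectic bases adapted to the Lagrangian frames and define $S$ as the basis-change map (your reduction through the canonical frame $(\ell_X,\ell_P)$ is just the paper's direct frame-to-frame construction applied twice and composed). The one difference is that where the paper simply asserts one can \emph{choose} bases of $\ell$ and $\ell'$ whose union is a symplectic basis, you actually justify this step, proving non-degeneracy of the pairing $\omega$ on $\ell'\times\ell$ from transversality and the Lagrangian condition and then invoking the dual-basis construction to get $\omega(f_i,e_j)=\delta_{ij}$ --- a detail the paper leaves implicit.
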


\begin{proof}
Choose a basis $\mathcal{B=}\{e_{11},...,e_{1n}\}$ of $\ell_{1}$ and a basis
$\mathcal{B}^{\prime}=\{f_{11},...,f_{1n}\}$ of $\ell_{1}^{\prime}$ such that
$\{e_{1i},f_{1j}\}_{1\leq i,j\leq n}$ is a symplectic basis of $(\mathbb{R}%
_{z}^{2n},\omega)$ (\textit{i.e.} $\omega(e_{i1},e_{j1})=\omega(f_{i1}%
,f_{j1})=0$ and $\omega(f_{i1},e_{j1})=\delta_{ij}$ for all $i,j=1,...,n$).
Similarly choose bases of $\ell_{2}$ and $\ell_{2}^{\prime}$ whose union
$\{e_{2i},f_{2j}\}_{1\leq i,j\leq n}$ is also a symplectic basis. Define a
linear mapping $S:\mathbb{R}^{2n}\longrightarrow\mathbb{R}^{2n}$ by
$S(e_{1i})=e_{2i}$ and $S(f_{1i})=f_{2i}$ for $1\leq i\leq n$. We have $S\in$
$\operatorname*{Sp}(n)$ and $(\ell_{2},\ell_{2}^{\prime})=(S\ell_{1},S\ell
_{1}^{\prime})$.
\end{proof}

Notice that we cannot replace $\operatorname*{Sp}(n)$ with $U(n)$ in the
result above. For instance, in the case $n=1$ no rotation will take an
arbitrary pair of transverse of lines to another arbitrary pair of transverse
lines if they do not form equal angles ($U(1)=SO(2,\mathbb{R})$ preserves
angles, while $\operatorname*{Sp}(1)$ does not).

\begin{remark}
It follows from Proposition \ref{Prop1} that every Lagrangian frame in
$(\mathbb{R}^{2n},\omega)$ can be obtained from the canonical frame $(\ell
_{X},\ell_{P})$ using a symplectic transformation.
\end{remark}

The following property is useful when considering phase space shifts of the origin:

\begin{lemma}
\label{Rem1}Every phase space point $z_{0}\in\mathbb{R}^{2n}$ belongs to at
least one Lagrangian plane.
\end{lemma}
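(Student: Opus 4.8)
The plan is to prove Lemma~\ref{Rem1} by explicit construction: given an arbitrary phase space point $z_{0}=(x_{0},p_{0})\in\mathbb{R}^{2n}$, I will exhibit a single Lagrangian plane containing it. The simplest observation is that it suffices to treat the case $z_{0}\neq0$, since the origin lies in every Lagrangian plane (each $\ell$ is a linear subspace). So assume $z_{0}\neq 0$; I then want to find $\ell\in\operatorname*{Lag}(n)$ with $z_{0}\in\ell$.

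First I would reduce to the case $z_{0}=(x_{0},0)\in\ell_{X}$. This is where the transitivity results proved earlier do the work: since $\operatorname*{Sp}(n)$ acts transitively on Lagrangian frames (Proposition~\ref{Prop1}), and in particular the groups $\operatorname*{Sp}(n)$ and $U(n)$ act on the nonzero vectors of $\mathbb{R}^{2n}$ in a suitably rich way, I can find a symplectic transformation $S$ carrying $z_{0}$ into the position plane. Concretely, using the complex picture from the Remark after the definition of $\operatorname*{Lag}(n)$, identify $z_{0}=(x_{0},p_{0})$ with the complex vector $x_{0}+ip_{0}\in\mathbb{C}^{n}$; a symplectic rotation $U\in U(n)$ corresponds to a complex unitary matrix, and unitary matrices act transitively on spheres in $\mathbb{C}^{n}$, so there exists $U\in U(n)$ with $U^{-1}z_{0}=(x_{0}',0)$ for some $x_{0}'\in\mathbb{R}^{n}$. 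Then $(x_{0}',0)\in\ell_{X}$, and setting $\ell=U\ell_{X}$ we get a Lagrangian plane (the image of a Lagrangian plane under $U\in\operatorname*{Sp}(n)$ is Lagrangian, as noted in the discussion preceding Proposition~\ref{PropU}) that contains $z_{0}=U(x_{0}',0)$.

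Alternatively, and perhaps more transparently, I would give a direct linear-algebra construction avoiding the group action entirely. Complete the nonzero vector $z_{0}$ to a basis and build a Lagrangian subspace around it: choose any $n$-dimensional subspace $\ell\ni z_{0}$ that is isotropic, i.e.\ on which $\omega$ vanishes. One can do this by the standard greedy/inductive procedure for constructing isotropic subspaces in a symplectic vector space: start with the line $\mathbb{R}z_{0}$, which is automatically isotropic since $\omega(z_{0},z_{0})=0$ by antisymmetry of $\omega$; then repeatedly extend by adjoining a vector in the symplectic orthogonal complement of the current isotropic subspace but outside the subspace itself, which remains possible as long as the dimension is below $n$. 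A maximal isotropic subspace in $(\mathbb{R}^{2n},\omega)$ has dimension exactly $n$, hence is Lagrangian, and by construction it contains $z_{0}$.

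The main obstacle is essentially bookkeeping rather than anything deep: I must make sure the extension step in the inductive construction is always feasible, which follows because for an isotropic subspace $W$ with $\dim W=k<n$ one has $\dim W^{\omega}=2n-k>k=\dim W$, so $W^{\omega}\setminus W$ is nonempty and any vector chosen there keeps the span isotropic. Given that the paper already supplies the transitivity of $U(n)$ on $\operatorname*{Lag}(n)$ and the complex-vector interpretation of Lagrangian planes, the cleanest write-up is the first approach via $U\ell_{X}$, reducing the whole lemma to the elementary fact that the unitary group acts transitively on the unit sphere of $\mathbb{C}^{n}$.
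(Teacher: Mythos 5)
Your proposal is correct, and your preferred route is genuinely different from the paper's. The paper argues by completing the vector: it normalizes $z_{0}=\lambda e_{1}$ and extends $e_{1}$ to a full symplectic basis $\{e_{1},\dots,e_{n}\}\cup\{f_{1},\dots,f_{n}\}$ via symplectic Gram--Schmidt (citing \cite{Birk} for the construction), then takes $\ell=\operatorname{span}\{e_{1},\dots,e_{n}\}$. Your second, ``alternative'' argument is essentially this same idea, but leaner and more self-contained: you extend only the isotropic subspace itself, justifying each step by the dimension count $\dim W^{\omega}=2n-k>k$ for $k<n$, so you never need the dual vectors $f_{2},\dots,f_{n}$ nor the external citation --- a maximal isotropic subspace through $z_{0}$ has dimension $n$ and is Lagrangian by definition. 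Your first route, by contrast, has no counterpart in the paper: identifying $z_{0}$ with $x_{0}+ip_{0}\in\mathbb{C}^{n}$ and using transitivity of the unitary group on spheres to find $U\in U(n)$ with $U^{-1}z_{0}\in\ell_{X}$, then taking $\ell=U\ell_{X}$. This buys a one-line proof that reuses machinery the paper has already set up (the complex-vector interpretation of Lagrangian planes and the action of $U(n)$ on $\operatorname*{Lag}(n)$), at the cost of depending on that machinery; one small convention point worth checking in a final write-up is that under the embedding $\iota:A+iB\longmapsto\begin{pmatrix}A & B\\ -B & A\end{pmatrix}$ the symplectic rotation acts on $x+ip$ as the conjugate matrix $A-iB$, which changes nothing in the argument since conjugates of unitaries are unitary. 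Both of your arguments are complete; the direct isotropic-extension version is arguably a cleaner proof than the paper's own, since it proves rather than cites the completion step.
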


\begin{proof}
The case $z_{0}=0$ being trivial we assume $z_{0}\neq0$. Let $e_{1}$ be a
normalized vector such that $z_{0}=\lambda e_{1}$ and choose vectors
$e_{2},...,e_{n}$ and $f_{2},...,f_{n}$ such that $\{e_{1},...,e_{n}%
\}\cup\{f_{1},...,f_{n}\}$ is a symplectic basis of $\mathbb{R}^{2n}$ (this is
a symplectic variant of the Gram--Schmidt orthonormalization process, see
\cite{Birk} for an explicit construction). The subspace spanned by the set of
vectors $\{e_{1},...,e_{n}\}$ is Lagrangian and contains $z_{0}$.
\end{proof}

\subsection{Lagrangian ellipsoids}

Let us identify the position space ellipsoid
\[
X=\{x\in\mathbb{R}_{x}^{n}:Ax\cdot x\leq\hbar\}
\]
with the phase space subset%
\[
X=\{z=(x,0):(A\oplus0)z\cdot z\leq\hbar\}
\]
where, by definition,
\[
A\oplus0=%
\begin{pmatrix}
A & 0_{n\times n}\\
0_{n\times n} & 0_{n\times n}%
\end{pmatrix}
.
\]
The image of $X$ by $S\in\operatorname*{Sp}(n)$ (or by any phase space
automorphism) is then%
\begin{equation}
S(X)=\{z:((S^{T})^{-1}(A\oplus0)S^{-1})z\cdot z\leq\hbar\}. \label{sxa}%
\end{equation}

Let us call \textquotedblleft quantum blob\textquotedblright\ \cite{blob} the
image of the phase space ball $B^{2n}(z_{0},\sqrt{\hbar})$ by a symplectic
transformation. The following property shows that every ellipsoid carried by a
Lagrangian plane $\ell$ is the intersection $\ell\cap Q$ of that subspace with
a quantum blob (or any other phase space ball, for that):

\begin{proposition}
Let $X_{\ell}$ be an $n$-dimensional ellipsoid centered at $z_{0}\in\ell$ and
carried by the Lagrangian plane $\ell\in\operatorname*{Lag}(n)$. There exists
$S\in\operatorname*{Sp}(n)$ such that $X_{\ell}=S(B^{2n}(S^{-1}z_{0}%
,\sqrt{\hbar}))\cap\ell$.
\end{proposition}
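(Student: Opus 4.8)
The plan is to peel off the two symmetries of the problem — translation and the choice of Lagrangian plane — until only an explicit computation on the position plane remains. First I would reduce to the case $z_0=0$. Writing $Q=S\bigl(B^{2n}(0,\sqrt{\hbar})\bigr)$ for the centered quantum blob, linearity of $S$ together with $SS^{-1}z_0=z_0$ gives $S\bigl(B^{2n}(S^{-1}z_0,\sqrt{\hbar})\bigr)=Q+z_0$; this is exactly why the center of the ball is placed at $S^{-1}z_0$ in the statement. Since $z_0\in\ell$ and $\ell$ is a linear subspace, one checks directly that $(Q+z_0)\cap\ell=(Q\cap\ell)+z_0$. Hence it suffices to find $S\in\operatorname*{Sp}(n)$ with $Q\cap\ell=X_\ell-z_0$, i.e. to realize the \emph{centered} ellipsoid $X_\ell-z_0$ as the slice of a centered quantum blob.

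Next I would reduce to $\ell=\ell_X$. By Proposition \ref{PropU} there is $U\in U(n)\subseteq\operatorname*{Sp}(n)$ with $\ell=U\ell_X$, so that $E:=U^{-1}(X_\ell-z_0)$ is a centered $n$-dimensional ellipsoid carried by $\ell_X$. Granting the problem solved on $\ell_X$, say $S_0\in\operatorname*{Sp}(n)$ with $S_0\bigl(B^{2n}(0,\sqrt{\hbar})\bigr)\cap\ell_X=E$, I set $S=US_0\in\operatorname*{Sp}(n)$. Because $U$ is a bijection that commutes with intersection and sends $\ell_X$ to $\ell$,
\[
S\bigl(B^{2n}(0,\sqrt{\hbar})\bigr)\cap\ell
=U\Bigl(S_0\bigl(B^{2n}(0,\sqrt{\hbar})\bigr)\cap\ell_X\Bigr)
=U(E)=X_\ell-z_0,
\]
which is what the previous reduction requires.

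Finally I would treat a centered ellipsoid carried by $\ell_X$, which may be written $\{(x,0):Ax\cdot x\le\hbar\}$ with $A$ symmetric positive definite. Taking $L=A^{1/2}$ (the positive square root) and $S_0=M_L$, whose inverse sends $(x,0)$ to $(Lx,0)$ by (\ref{inverse}), I find that $(x,0)\in M_L\bigl(B^{2n}(0,\sqrt{\hbar})\bigr)$ iff $|Lx|^2=L^TLx\cdot x=Ax\cdot x\le\hbar$. Intersecting with $\ell_X$ therefore yields $M_L\bigl(B^{2n}(0,\sqrt{\hbar})\bigr)\cap\ell_X=\{(x,0):Ax\cdot x\le\hbar\}=E$, completing the construction and, unwinding the two reductions, the proof.

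The computation in the last step is routine; the care needed lies entirely in the bookkeeping of the first two reductions. The one point I would check carefully is the translation identity $S\bigl(B^{2n}(S^{-1}z_0,\sqrt{\hbar})\bigr)=Q+z_0$, where the factor $S^{-1}$ in the statement is essential so that $S$ carries the center back to $z_0\in\ell$; everything then hinges on the hypothesis $z_0\in\ell$, which is precisely what makes intersection with $\ell$ commute with translation by $z_0$.
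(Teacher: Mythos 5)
Your proof is correct and follows essentially the same route as the paper's: reduce to the centered case, realize a centered ellipsoid $\{x:Ax\cdot x\le\hbar\}$ in $\ell_X$ as the slice $\Omega\cap\ell_X$ of an explicit quantum blob, and transport to a general $\ell$ by a symplectic rotation from Proposition \ref{PropU}. You are in fact more careful on two points: you actually justify the reduction to $z_0=0$ (the paper merely asserts ``it is sufficient to assume $z_0=0$'', and your observation that $(Q+z_0)\cap\ell=(Q\cap\ell)+z_0$ requires $z_0\in\ell$ is exactly the needed step), and your choice $S_0=M_{A^{1/2}}=\begin{pmatrix}A^{-1/2}&0\\0&A^{1/2}\end{pmatrix}$ is the correct symplectic map of $B^{2n}(\sqrt{\hbar})$ onto $\Omega=\{(x,p):Ax\cdot x+A^{-1}p\cdot p\le\hbar\}$, whereas the matrix displayed in the paper's formula (\ref{MSymp}), $\begin{pmatrix}A&0\\0&A^{-1}\end{pmatrix}$, does not (an evident typo there).
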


\begin{proof}
It is sufficient to assume $z_{0}=0$. We first consider the case $\ell
=\ell_{X}$, then $X_{\ell_{X}}=\{x:Ax\cdot x\leq\hbar\}$ where $A$ is a
symmetric positive definite matrix. Clearly, $X_{\ell_{X}}$ is the
intersection of the phase space ellipsoid
\[
\Omega=\{(x,p):Ax\cdot x+A^{-1}p\cdot p\leq\hslash\}
\]
with $\ell_{X}$, and $\Omega$ is indeed a quantum blob since $\Omega
=S(B^{2n}(\sqrt{\hbar}))$ with
\begin{equation}
S=%
\begin{pmatrix}
A & 0\\
0 & A^{-1}%
\end{pmatrix}
\in\operatorname*{Sp}(n). \label{MSymp}%
\end{equation}
Suppose now $\ell$ is an arbitrary Lagrangian plane. In view of Proposition
\ref{PropU} there exists a symplectic rotation $R\in U(n)$ such that
$\ell=R\ell_{X}$. The set $X_{\ell_{X}}=R^{-1}(X_{\ell})$ is an ellipsoid in
$\ell_{X}$ centered at $z_{0}=0$ and hence $X_{\ell_{X}}=Q\cap\ell_{X}$ for
some quantum blob $Q$, and\ $X_{\ell}=R(X_{\ell_{X}})=(RQ)\cap\ell$ which
concludes the proof since $R(Q)$ is also a quantum blob.
\end{proof}

\begin{remark}
\label{Rem2}The quantum blob described in the result above is not unique. For
instance there exist infinitely many quantum blobs $Q=S(B^{2n}(\sqrt{\hbar}))$
such that $X_{\ell_{X}}=Q\cap\ell_{X}$.
\end{remark}

\section{Lagrangian Polar Duality and Quantum States}

\subsection{Polar duality: review}

We begin by briefly recalling the usual notion of polar duality from convex
geometry (we are following our presentation in \cite{gopolar}); for the
notions of convex geometry we use see for instance \cite{ABMB,rocka,Vershynin}%
). Let $X$ be a convex body in configuration space $\mathbb{R}_{x}^{n}$ (a
convex body is a compact convex set with non-empty interior). We assume in
addition that $X$ contains $0$ in its interior. This is the case if, for
instance, $X$ is symmetric: $X=-X$. The \emph{polar dual} of $X$ is the
subset
\begin{equation}
X^{\hbar}=\{p\in\mathbb{R}_{x}^{n}:\sup\nolimits_{x\in X}(p\cdot x)\leq\hbar\}
\label{omo}%
\end{equation}
of the dual space $\mathbb{R}_{p}^{n}\equiv(\mathbb{R}_{x}^{n})^{\ast}$.
Notice that it trivially follows from the definition that $X^{\hbar}$ is
convex and contains $0$ in its interior. In the mathematical literature one
usually chooses $\hbar=1$, in which case one writes $X^{o}$ for the polar
dual; we have $X^{\hbar}=\hbar X^{o}$. The following properties are straightforward:

\begin{center}%
\begin{tabular}
[c]{|l|l|l|}\hline
\textit{Reflexivity (bipolarity)}: & $(X^{\hbar})^{\hbar}=X$ & P1\\\hline
\textit{Antimonotonicity: } & $X\subset Y\Longrightarrow Y^{\hbar}\subset
X^{\hbar}$ & P2\\\hline
\textit{Scaling property} & $A\in GL(n,\mathbb{R})\Longrightarrow(AX)^{\hbar
}=(A^{T})^{-1}X^{\hbar}$. & P3\\\hline
\end{tabular}

\end{center}

In \cite{gopolar} we proved the following elementary properties of polar duality:

\textit{(i)} Let $B_{X}^{n}(R)$ (\textit{resp}. $B_{P}^{n}(R)$) be the ball
$\{x:|x|\leq R\}$ in $\mathbb{R}_{x}^{n}$ (\textit{resp}. $\{p:|p|\leq R\}$ in
$\mathbb{R}_{p}^{n}$). Then
\begin{equation}
B_{X}^{n}(R)^{\hbar}=B_{P}^{n}(\hbar/R)~. \label{BhR}%
\end{equation}
In particular
\begin{equation}
B_{X}^{n}(\sqrt{\hbar})^{\hbar}=B_{P}^{n}(\sqrt{\hbar}). \label{bhh}%
\end{equation}

\textit{(ii)} Let $A$ be a real invertible and symmetric $n\times n$ matrix
and $R>0$. The polar dual of the ellipsoid defined by $Ax\cdot x\leq R^{2}$ is
given by
\begin{equation}
\{x:Ax\cdot x\leq R^{2}\}^{\hbar}=\{p:A^{-1}p\cdot p\leq(\hbar/R)^{2}\}
\label{dualell}%
\end{equation}
and hence%
\begin{equation}
\{x:Ax\cdot x\leq\hbar\}^{\hbar}=\{p:A^{-1}p\cdot p\leq\hbar\}~.
\label{dualellh}%
\end{equation}

We can easily picture that the polar set $X^{\hbar}$ is \textquotedblleft
large\textquotedblright\ when $X$ is \textquotedblleft small\textquotedblright%
\ since $X$ and $X^{\hbar}$ are \textquotedblleft inversely\textquotedblright%
\ related \cite{Vershynin}; these sets can also be viewed as Fourier
transforms of each other. These qualitative statements, reminiscent of the
uncertainty principle, are clarified by the following remarkable property of
polar duality, called the \textit{Blaschke--Santal\'{o} inequality}: assume
that $X$ is a symmetric body; then there exists \cite{Campi} $c>0$ such that
\begin{equation}
c\leq\operatorname*{Vol}\nolimits_{n}(X)\operatorname*{Vol}\nolimits_{n}%
(X^{\hbar})\leq(\operatorname*{Vol}\nolimits_{n}(B^{n}(\sqrt{\hbar}))^{2}
\label{santalo1}%
\end{equation}
where $\operatorname*{Vol}\nolimits_{n}$ is the standard Lebesgue measure on
$\mathbb{R}^{n}$, and equality is attained if and only if $X\subset
\mathbb{R}_{x}^{n}$ is an ellipsoid centered at the origin The Mahler
conjecture (which is still unproven) is that the best constant $c$ is
$(4\hbar)^{n}/n!$ (see \cite{gopolar}) for a discussion of partial results and references).

\subsection{Lagrangian polar duality}

Let now $(\ell,\ell^{\prime})$ be a Lagrangian frame in the symplectic phase
space $(\mathbb{R}^{2n},\omega)$ and $X_{\ell}$ a centrally symmetric convex
body in $\ell$ (\textit{i.e.} $X_{\ell}=-X_{\ell}$). The Lagrangian polar dual
$X_{\ell^{\prime}}^{\hbar}$ of $X_{\ell}$ in $\ell^{\prime}$ is the subset of
$\ell^{\prime}$ consisting of all $z^{\prime}\in\ell^{\prime}$ such that
\begin{equation}
\omega(z^{\prime},z)\leq\hbar\text{ \ for all \ }z\in X_{\ell}; \label{ozz}%
\end{equation}
equivalently, since $X_{\ell}$ is centrally symmetric and $\omega$
antisymmetric,%
\begin{equation}
\omega(z,z^{\prime})\leq\hbar\text{ \ for all \ }z\in X_{\ell}. \label{zzbis}%
\end{equation}

The Lagrangian polar dual $X_{\ell^{\prime}}^{\hbar}$ is also a centrally
symmetric body. Suppose in particular that the Lagrangian frame $(\ell
,\ell^{\prime})$ is the canonical frame $(\ell_{X},\ell_{P})$. Then $z=(x,0)$
and $z^{\prime}=(0,p^{\prime})$ so that condition (\ref{ozz}) becomes
$p^{\prime}\cdot x\leq\hbar$; the notion of Lagrangian polar duality for
$(\ell_{X},\ell_{P})$ thus reduces the usual notion of polar duality as
described above. It is always possible to reduce Lagrangian polar duality to
ordinary polar duality. Recall that the symplectic group acts transitively on
the manifold of Lagrangian frames.

\begin{proposition}
\label{Prop3} Let $(X_{\ell},X_{\ell^{\prime}}^{\hbar})$ be a dual pair and
choose $S\in\operatorname*{Sp}(n)$ such that $(\ell,\ell^{\prime})=S(\ell
_{X},\ell_{P})$. Let $X=S^{-1}(X_{\ell})\subset\ell_{X}$. We have
$S^{-1}X_{\ell^{\prime}}^{\hbar}=X^{\hbar}\subset\ell_{P}$. Thus%
\begin{equation}
(X_{\ell},X_{\ell^{\prime}}^{\hbar})=S(X,X^{\hbar})\text{ \ \textit{if}
\ }(\ell,\ell^{\prime})=S(\ell_{X},\ell_{P}) \label{sxl}%
\end{equation}
($X^{\hbar}\subset\ell_{P}$ is the ordinary polar dual of $X\subset\ell_{X}$).
\end{proposition}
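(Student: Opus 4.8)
The plan is to exploit the single structural fact that $S$ preserves the symplectic form, so that the symplectic inequality defining the Lagrangian polar dual is left unchanged when everything is transported back to the canonical frame by $S^{-1}$. First I would unwind the definition (\ref{ozz}): we have $X_{\ell^{\prime}}^{\hbar}=\{z^{\prime}\in\ell^{\prime}:\omega(z^{\prime},z)\leq\hbar\text{ for all }z\in X_{\ell}\}$. Since by hypothesis $\ell=S\ell_{X}$ and $\ell^{\prime}=S\ell_{P}$, every $z\in\ell$ can be written $z=Sw$ with $w\in\ell_{X}$, and every $z^{\prime}\in\ell^{\prime}$ as $z^{\prime}=Sw^{\prime}$ with $w^{\prime}\in\ell_{P}$. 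In particular $z\in X_{\ell}$ if and only if $w=S^{-1}z\in X$, and the central symmetry of $X=S^{-1}(X_{\ell})$ is automatic because $S^{-1}$ is linear and $X_{\ell}=-X_{\ell}$.

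Next I would apply $S^{-1}$ to the defining set and invoke symplectic invariance. For $z^{\prime}=Sw^{\prime}$ and $z=Sw$ we have $\omega(z^{\prime},z)=\omega(Sw^{\prime},Sw)=\omega(w^{\prime},w)$ because $S\in\operatorname*{Sp}(n)$. Substituting $w^{\prime}=S^{-1}z^{\prime}$ and $w=S^{-1}z$ into the description of $X_{\ell^{\prime}}^{\hbar}$ therefore yields $S^{-1}X_{\ell^{\prime}}^{\hbar}=\{w^{\prime}\in\ell_{P}:\omega(w^{\prime},w)\leq\hbar\text{ for all }w\in X\}$, which is exactly the Lagrangian polar dual of $X\subset\ell_{X}$ taken with respect to the \emph{canonical} frame $(\ell_{X},\ell_{P})$.

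Finally I would identify this canonical Lagrangian dual with the ordinary polar dual, as already observed just before the statement. Writing $w=(x,0)$ and $w^{\prime}=(0,p)$ and using the explicit form $\omega(z,z^{\prime})=p\cdot x^{\prime}-p^{\prime}\cdot x$, the symplectic condition $\omega(w^{\prime},w)\leq\hbar$ collapses to $p\cdot x\leq\hbar$. Under the identifications $\ell_{X}\equiv\mathbb{R}_{x}^{n}$ and $\ell_{P}\equiv\mathbb{R}_{p}^{n}$ this reads $\sup_{x\in X}(p\cdot x)\leq\hbar$, which is precisely the defining condition (\ref{omo}) for the ordinary polar dual $X^{\hbar}\subset\ell_{P}$. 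Hence $S^{-1}X_{\ell^{\prime}}^{\hbar}=X^{\hbar}$, and applying $S$ to the pair $(X,X^{\hbar})$ gives the asserted identity (\ref{sxl}).

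As for the main obstacle, there is no deep difficulty: the whole content is the invariance $\omega(Sw^{\prime},Sw)=\omega(w^{\prime},w)$ together with the reduction of canonical Lagrangian duality to ordinary duality. The one point requiring a moment's care is the bookkeeping of which argument of $\omega$ carries $z^{\prime}$ and which carries $z$; this asymmetry is harmless because $X_{\ell}$ is centrally symmetric, so that $\omega(z^{\prime},z)\leq\hbar$ and $\omega(z,z^{\prime})\leq\hbar$ define the same set, as recorded in the equivalence of (\ref{ozz}) and (\ref{zzbis}).
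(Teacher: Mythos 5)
Your proof is correct and follows essentially the same route as the paper's: both transport the defining inequality back to the canonical frame via $S^{-1}$, use the symplectic invariance $\omega(Sw',Sw)=\omega(w',w)$, and then identify the condition $\omega(w',w)\leq\hbar$ with the ordinary polarity condition $p\cdot x\leq\hbar$. If anything, your version is slightly more careful than the paper's about the argument-order bookkeeping (invoking the equivalence of (\ref{ozz}) and (\ref{zzbis}) via central symmetry), which the paper glosses over.
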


\begin{proof}
Let $z\in X_{\ell}$ and $z^{\prime}\in X_{\ell^{\prime}}^{\hbar}$ and define
$(x,0)=S^{-1}z$, $(0,p^{\prime})=S^{-1}z^{\prime}$. We have
\[
p^{\prime}\cdot x=\omega((x,0);(0,p^{\prime}))=\omega((S^{-1}z;S^{-1}%
z^{\prime})=\omega((z;z^{\prime})
\]
hence the conditions $\omega(z,z^{\prime})\leq\hbar$ \ and $p^{\prime}\cdot
x\leq\hbar$ are equivalent.
\end{proof}

The following table summarizes the main properties of Lagrangian polar duality:

\begin{center}%
\begin{tabular}
[c]{|l|l|l|}\hline
\textit{Reflexivity}: & $(X_{\ell^{\prime}}^{\hbar})_{\ell}^{\hbar}=X_{\ell}$
& LP1\\\hline
\textit{Antimonotonicity: } & $X_{\ell}\subset Y_{\ell}\Longrightarrow
Y_{\ell^{\prime}}^{\hbar}\subset X_{\ell^{\prime}}^{\hbar}$ & LP2\\\hline
\textit{Symplectic covariance}: & $S\in\operatorname*{Sp}(n)\Longrightarrow
S(X_{\ell^{\prime}}^{\hbar})=(SX_{\ell})_{S\ell^{\prime}}^{\hbar}.$ &
LP3\\\hline
\end{tabular}

\end{center}

The following characteristic property of quantum blobs is also useful:

\begin{proposition}
Let $Q=S(B^{2n}(\sqrt{\hbar}))$ be a centered quantum blob and $(\ell_{X}%
,\ell_{P})\in\operatorname*{Lag}_{0}^{2}(n)$ the canonical Lagrangian frame.
The intersection $Q\cap\ell_{X}$ and the orthogonal projection $\Pi_{\ell_{P}%
}Q$ are polar dual of each other. We have a similar statement interchanging
$\ell_{X}$ and $\ell_{P}$.
\end{proposition}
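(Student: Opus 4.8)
The plan is to reduce everything to an explicit computation of the defining quadratic form of $Q$ and then to exploit the fact that the Gram matrix of a quantum blob is itself symplectic. First I would write $Q=S(B^{2n}(\sqrt{\hbar}))$ as the ellipsoid $Q=\{z:Gz\cdot z\leq\hbar\}$ with $G=(SS^{T})^{-1}=(S^{T})^{-1}S^{-1}$, exactly by the same computation as in (\ref{sxa}) applied to the unit ball. Two properties of $G$ drive the proof: $G$ is symmetric and positive definite (so every diagonal block is invertible and the minimizations below are well posed), and $G$ is symplectic, because $S\in\operatorname*{Sp}(n)$ forces $S^{T}\in\operatorname*{Sp}(n)$ via the identity $SJS^{T}=J$, whence $SS^{T}$ and its inverse $G$ lie in $\operatorname*{Sp}(n)$. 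Writing $G$ in block form with diagonal blocks $G_{XX},G_{PP}$ and off-diagonal block $G_{XP}$ (and $G_{PX}=G_{XP}^{T}$ by symmetry), the intersection $Q\cap\ell_{X}$ is obtained by setting $p=0$, giving the $\ell_{X}$-ellipsoid $\{x:G_{XX}x\cdot x\leq\hbar\}$.

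Next I would compute the shadow $\Pi_{\ell_{P}}Q$. A momentum vector $p$ lies in the projection if and only if there is some $x$ with $(x,p)\in Q$, that is iff $\min_{x}(G_{XX}x\cdot x+2G_{XP}p\cdot x+G_{PP}p\cdot p)\leq\hbar$. Minimizing the quadratic in $x$ (the minimizer is $x=-G_{XX}^{-1}G_{XP}p$) produces the Schur complement, so $\Pi_{\ell_{P}}Q=\{p:(G/G_{XX})\,p\cdot p\leq\hbar\}$ with $G/G_{XX}=G_{PP}-G_{PX}G_{XX}^{-1}G_{XP}$. By the polar-duality formula (\ref{dualellh}), proving the proposition is therefore equivalent to the single matrix identity $G/G_{XX}=G_{XX}^{-1}$.

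The crux, and the only place where symplecticity is really used, is establishing this identity, and here it falls out almost for free. On one hand, the standard block-inversion formula identifies the bottom-right ($PP$) block of $G^{-1}$ with $(G/G_{XX})^{-1}$. On the other hand, since $G$ is symplectic, formula (\ref{inverse}) gives $G^{-1}$ explicitly, and its bottom-right block is $G_{XX}^{T}=G_{XX}$ (using symmetry of $G$). Equating the two expressions yields $(G/G_{XX})^{-1}=G_{XX}$, i.e. $G/G_{XX}=G_{XX}^{-1}$, which is exactly what the polar-duality formula demands. Substituting back gives $(Q\cap\ell_{X})^{\hbar}=\{p:G_{XX}^{-1}p\cdot p\leq\hbar\}=\Pi_{\ell_{P}}Q$.

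Finally, the statement with $\ell_{X}$ and $\ell_{P}$ interchanged is proved identically: the section $Q\cap\ell_{P}$ has matrix $G_{PP}$, the shadow $\Pi_{\ell_{X}}Q$ has matrix the Schur complement $G/G_{PP}=G_{XX}-G_{XP}G_{PP}^{-1}G_{PX}$, and the top-left block of $G^{-1}$ from (\ref{inverse}) equals $G_{PP}^{T}=G_{PP}$, so $G/G_{PP}=G_{PP}^{-1}$. I expect no genuine obstacle beyond bookkeeping the block formulas; the one point worth keeping in mind is that positive-definiteness of $G$ (a consequence of $Q$ being a genuine full-dimensional ellipsoid) is what guarantees that $G_{XX}$ and $G_{PP}$ are invertible and that the projections are themselves ellipsoids rather than degenerate sets.
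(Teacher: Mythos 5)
Your proof is correct, and it follows the paper's reduction exactly up to the crux: like the paper, you identify the section $Q\cap\ell_{X}$ with $\{x:G_{XX}x\cdot x\leq\hbar\}$ and the shadow $\Pi_{\ell_{P}}Q$ with the Schur-complement ellipsoid $\{p:(G/G_{XX})p\cdot p\leq\hbar\}$, so that everything hinges on the identity $G/G_{XX}=G_{XX}^{-1}$ combined with the duality formula (\ref{dualellh}). Where you genuinely diverge is in how that identity is established. The paper applies the block conditions (\ref{cond1}) to the symmetric symplectic matrix $G$, obtaining the relations (\ref{Gcond}), and then verifies $G_{XX}(G_{PP}-G_{PX}G_{XX}^{-1}G_{XP})=I_{n\times n}$ by hand via a commutation relation between $G_{XX}^{-1}$ and the off-diagonal block. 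You instead compute the $PP$ block of $G^{-1}$ in two ways: as $(G/G_{XX})^{-1}$ by the standard block-inversion formula, and as $G_{XX}^{T}=G_{XX}$ by the symplectic inverse formula (\ref{inverse}). This is a cleaner packaging of the same symplectic input: it avoids the commutation manipulation entirely (where, incidentally, the paper's displayed computation contains small slips), and it makes the interchanged statement immediate, since the top-left block of $G^{-1}$ is $G_{PP}^{T}=G_{PP}$ by the same formula. Two further small differences are to your credit: you derive the projection formula $\Pi_{\ell_{P}}Q=\{p:(G/G_{XX})p\cdot p\leq\hbar\}$ by explicit minimization over $x$, whereas the paper imports it from \cite{gopolar}; and you note that positive definiteness of $G$ is what guarantees invertibility of $G_{XX}$, $G_{PP}$ and of the Schur complements, which is precisely the hypothesis both block formulas require.
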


\begin{proof}
We have to show that $Q\cap\ell_{X}$ and $\Pi_{\ell_{P}}Q$ are $n$-dimensional
ellipsoids $\{x:Ax\cdot x\leq\hbar\}$ and $\{p:Bp\cdot p\leq\hbar\}$ such that
$AB=I_{n\times n}$. The quantum blob $Q$ is represented by the inequality
$Gz\cdot z\leq\hbar$ where $G=(SS^{T})^{-1}\in\operatorname*{Sp}(n)$. Writing
$G$ in block matrix form $%
\begin{pmatrix}
G_{XX} & G_{XP}\\
G_{PX} & G_{PP}%
\end{pmatrix}
$ the following relations hold in view of the symplectic conditions
(\ref{cond1}), taking into account the symmetry of $G$:%
\begin{equation}
G_{XX}G_{PX}\text{ , }G_{PX}G_{PP}\ \text{\textit{symmetric and} }G_{XX}%
G_{PP}-G_{XP}^{2}=I_{n\times n}. \label{Gcond}%
\end{equation}
With this notation we clearly have
\[
Q\cap\ell_{X}=\{x:G_{XX}x\cdot x\leq\hbar\}
\]
while the orthogonal projection $\Pi_{\ell_{P}}Q$ is given by (see
\cite{gopolar})%
\[
\Pi_{\ell_{P}}Q=\{p:(G/G_{XX})p\cdot p\leq\hbar\}
\]
where $G/G_{XX}$ is the Schur complement
\[
G/G_{XX}=G_{PP}-G_{PX}G_{XX}^{-1}G_{XP}.
\]
To prove the proposition it therefore suffices to show that%
\[
G_{XX}(G_{PP}-G_{PX}G_{XX}^{-1}G_{XP})=I_{n\times n}%
\]
but this follows from the relations (\ref{Gcond}) which in particular imply
that $G_{PX}G_{XX}^{-1}=G_{XX}^{-1}G_{PX}$:%
\begin{align*}
G_{XX}(G_{PP}-G_{PX}G_{XX}^{-1}G_{XP})  &  =G_{XX}G_{PP}-G_{XX}(G_{PX}%
G_{XX}^{-1})G_{XP})\\
&  =G_{XX}G_{PP}-G_{XP}^{2})=I_{n\times n}.
\end{align*}

\end{proof}

\section{Lagrangian Quantum States}

\subsection{Definition of a Lagrangian quantum state}

The definition of quantum states we are giving here generalizes the Definition
3 in \cite{gopolar}.

\begin{definition}
[Centered case]\label{Def1}Let $(\ell,\ell^{\prime})\in\operatorname*{Lag}%
_{0}^{2}(n)$ be a Lagrangian frame and $X_{\ell}$ be an ellipsoid with center
$0$ carried by $\ell$. We call the product $X_{\ell}\times X_{\ell^{\prime}%
}^{\hbar}$ the Lagrangian quantum state in $\mathbb{R}^{2n}$ associated with
the frame $(\ell,\ell^{\prime})$ and the ellipsoid $X_{\ell}$ and we set%
\[
\operatorname*{Quant}\nolimits_{0}(n)=\{X_{\ell}\times X_{\ell^{\prime}%
}^{\hbar}:(\ell,\ell^{\prime})\in\operatorname*{Lag}\nolimits_{0}^{2}(n)\}.
\]

\end{definition}

The elements of $\operatorname*{Quant}_{0}(1)$ are parallelograms with area
$4\hbar$ in the phase plane, while $\operatorname*{Quant}_{0}(2)$ consist of
products of two dual plane ellipses. The simplest example of a state in
$2n$-dimensional phase space is what we call the \textquotedblleft fiducial
state\textquotedblright, defined by%
\begin{equation}
X_{\ell_{X}}\times X_{\ell_{P}}^{\hbar}=B_{X}^{n}(\sqrt{\hbar})\times
B_{P}^{n}(\sqrt{\hbar}). \label{fidu}%
\end{equation}

To define a quantum state when the ellipsoid $X_{\ell}$ has center $z_{0}%
\neq0$ some care is needed. Consider for example, for $\hbar=1$, the polar
dual $X^{1}$ of the disk $X=B^{2}((a,0),1)$ in the $x,y$ plane, where $0\leq
a<1$. It is the ellipse defined by \cite{ABMB}%
\begin{equation}
(1-a^{2})^{2}\left(  p_{x}+\frac{a}{1-a^{2}}\right)  ^{2}+(1-a^{2})p_{y}%
^{2}\leq1 \label{example}%
\end{equation}
and its area $\pi/(1-a^{2})$ becomes arbitrarily large when $a$ gets close to
one. To avoid this unwanted lack of stability we proceed as follows: suppose
the ellipsoid $X_{\ell}(z_{0})$ is centered at some $z_{0}\in\ell$ and
consider the translate $X_{\ell}=T(-z_{0})X_{\ell}(z_{0})$ (it is the set of
all $z-z_{0}$ for $z\in X_{\ell}(z_{0})$). Since $X_{\ell}$ has center $0$ we
can define as usual its Lagrangian polar $X_{\ell^{\prime}}^{\hbar}$, and by
definition this will be the Lagrangian polar dual of $X_{\ell}(z_{0})$. This
procedure, has been generalized by Santal\'{o} \cite{Santalo} to arbitrary
convex bodies, but is much more complicated in this case. This leads to the
following extension of Definition \ref{Def1}:

\begin{definition}
[General case]\label{Defquant}Let $(\ell,\ell^{\prime})\in\operatorname*{Lag}%
_{0}^{2}(n)$ and $(z_{0},z_{0}^{\prime})\in\mathbb{\ell\times\ell}^{\prime}$
(cf. Lemma \ref{Rem1}). Let $X_{\ell}(z_{0})=T(z_{0})X_{\ell}$ be an ellipsoid
carried by $\ell$ and centered at $z_{0}$.The Lagrangian quantum state
associated with $(\ell,\ell^{\prime},z_{0},z_{0}^{\prime})$and $X_{\ell}$ is
the product
\begin{equation}
X_{\ell}(z_{0})\times(X_{\ell}(z_{0})-z_{0})_{\ell^{\prime}}^{\hbar}%
+z_{0}^{\prime})=X_{\ell}(z_{0})\times X_{\ell^{\prime}}^{\hbar}(z_{0}%
^{\prime}) \label{defgeneral}%
\end{equation}
where we write $X_{\ell^{\prime}}^{\hbar}(z_{0}^{\prime})=T(z_{0}^{\prime
})X_{\ell^{\prime}}^{\hbar}$. We denote $\operatorname*{Quant}(n)$ the set of
all such quantum states.
\end{definition}

Here is a basic example:

\begin{example}
Let $z_{0}=(x_{0},0)$, $z_{0}^{\prime}=(0,p_{0})$,$\ell=\ell_{X}$,
$\ell^{\prime}=\ell_{P}$, and
\[
X_{\ell}(z_{0})=T(x_{0},0)(B_{X}^{n}(\sqrt{\hbar})\times0)=B_{X}^{n}%
(x_{0},\sqrt{\hbar})\times0.
\]
We have $(B_{X}^{n}(\sqrt{\hbar})\times0)_{\ell_{P}}^{\hbar}=0\times B_{P}%
^{n}(\sqrt{\hbar})$ hence the state is
\[
(B_{X}^{n}(x_{0},\sqrt{\hbar})\times0)\times(0\times B_{P}^{n}(p_{0}%
,\sqrt{\hbar}))\equiv B_{X}^{n}(x_{0},\sqrt{\hbar})\times B_{P}^{n}%
(p_{0},\sqrt{\hbar}).
\]

\end{example}

In classical mechanics the phase space $\mathbb{R}_{x}^{n}\times\mathbb{R}%
_{p}^{n}$ can be viewed as a fiber bundle over the configuration space
$\mathbb{R}_{x}^{n}$ using the projection $\pi_{x}(x,p)=x$; the fiber is then
just the momentum space $\mathbb{R}_{x}^{n}$. In the case of Lagrangian
quantum states we have a similar situation replacing the points in
configuration space with ellipsoids (\textquotedblleft
pointillisme\textquotedblright). Let $\mathcal{E\ell\ell(}\mathbb{R}_{x}^{n})$
be the set of all ellipsoids in $\ell_{X}=\mathbb{R}_{x}^{n}$; a typical
element of $\mathcal{E\ell\ell(}\mathbb{R}_{x}^{n})$ is the set of all $x$
such that $A(x-x_{0})\cdot(x-x_{0})\leq\hbar$. For instance, $B_{X}^{n}%
(x_{0},\sqrt{\hbar})\in\mathcal{E\ell\ell(}\mathbb{R}_{x}^{n})$. Let us now
work using the canonical Lagrangian frame $(\ell_{X},\ell_{P})$ and denote by
$\operatorname*{Quant}\nolimits_{\mathrm{can}}(n)\subset\operatorname*{Quant}%
(n)$ the set of quantum states $X(x_{0},0)\times X^{\hbar}(0,p_{0})$ where
$X(x_{0},0)\subset\ell_{X}$ and $X^{\hbar}(0,p_{0})\subset\ell_{P}$ is in
$\mathcal{E\ell\ell(}\mathbb{R}_{p}^{n})$. We define a projection
$\pi_{\mathrm{can}}:\operatorname*{Quant}\nolimits_{\mathrm{can}%
}(n)\longrightarrow\mathcal{E\ell\ell(}\mathbb{R}_{x}^{n})$ by
\[
\pi_{\mathrm{can}}(X(x_{0},0)\times X^{\hbar}(0,p_{0}))=X(x_{0},0)
\]
which defines a vector bundle structure on $\operatorname*{Quant}%
\nolimits_{\mathrm{can}}(n)$. The fiber over $X(x_{0},0)\in\mathcal{E\ell
\ell(}\mathbb{R}_{x}^{n})$ is
\[
\pi_{\mathrm{can}}^{-1}(X(x_{0},0))=\{X(x_{0},0)\times X^{\hbar}%
(0,p_{0}):p_{0}\in\mathbb{R}_{p}^{n}\}
\]
so we have the identification
\[
\pi_{\mathrm{can}}^{-1}(X(x_{0},0))\equiv X(x_{0},0)\times\mathcal{E\ell\ell
(}\mathbb{R}_{p}^{n}).
\]

\subsection{Symplectic actions on $\operatorname*{Quant}_{0}(n)$}

As expected, elliptic quantum states behave well under linear or affine
symplectic transformations. Recall from Proposition \ref{Prop3} that for every
dual pair $(X_{\ell},X_{\ell^{\prime}}^{\hbar})$ there exists $S\in
\operatorname*{Sp}(n)$ such that $(\ell,\ell^{\prime})=S(\ell_{X},\ell_{P})$
and $(X_{\ell},X_{\ell^{\prime}}^{\hbar})=S(X,X^{\hbar})$. Every quantum state
$X_{\ell}\times X_{\ell^{\prime}}^{\hbar}$ is thus the image by some
$S\in\operatorname*{Sp}(n)$ of a quantum state $X\times X^{\hbar}\subset
\ell_{X}\times\ell_{P}$ associated with the canonical Lagrangian frame. The
action of $\operatorname*{Sp}(n)$ on \ $\operatorname*{Quant}_{0}(n)$ is thus
naturally defined by the formula
\begin{equation}
S^{\prime}(X_{\ell}\times X_{\ell^{\prime}}^{\hbar})=S^{\prime}S(X\times
X^{\hbar})\subset S^{\prime}S\ell_{X}\times S^{\prime}S\ell_{P}. \label{sxlh}%
\end{equation}
We have a similar definition for the action of $\operatorname*{Sp}(n)$ on
$\operatorname*{Quant}(n)$. We define the action of $S^{\prime}\in
\operatorname*{Sp}(n)$ on the state $X_{\ell}(z_{0})\times X_{\ell^{\prime}%
}^{\hbar}(z_{0}^{\prime})$ by
\begin{equation}
S^{\prime}(X_{\ell}(z_{0})\times X_{\ell^{\prime}}^{\hbar}(z_{0}^{\prime
}))=T(S^{\prime}z_{0})SX_{\ell}\times T(S^{\prime}z_{0}^{\prime}%
)(SX)_{\ell^{\prime}}^{\hbar}. \label{defsymp}%
\end{equation}
This can be rewritten, taking (\ref{sxlh}) into account,
\begin{equation}
S^{\prime}(X_{\ell}(z_{0})\times X_{\ell^{\prime}}^{\hbar}(z_{0}^{\prime
}))=(S^{\prime}SX)(S^{\prime}z_{0})\times(S^{\prime}SX^{\hbar})(S^{\prime
}z_{0}^{\prime}). \label{defsp}%
\end{equation}

\begin{proposition}
\label{Prop4}(i) The symplectic action
\begin{equation}
\operatorname*{Sp}(n)\times\operatorname*{Quant}\nolimits_{0}%
(n)\longrightarrow\operatorname*{Quant}\nolimits_{0}(n) \label{symquant1}%
\end{equation}
defined by (\ref{sxlh}) is transitive. In particular, for every state
$X_{\ell}\times X_{\ell^{\prime}}^{\hbar}$ there exists $S\in
\operatorname*{Sp}(n)$ such that%
\begin{equation}
X_{\ell}\times X_{\ell^{\prime}}^{\hbar}=S(B_{X}^{n}(\sqrt{\hbar})\times
B_{P}^{n}(\sqrt{\hbar})) \label{symquantbis}%
\end{equation}
(that $S$ is not unique: see Remark \ref{Rem2}). (ii) The symplectic action
\begin{equation}
\operatorname*{Sp}(n)\times\operatorname*{Quant}(n)\longrightarrow
\operatorname*{Quant}(n) \label{symquant2}%
\end{equation}
defined by (\ref{defsymp}) is also transitive, and there exists $S\in
\operatorname*{Sp}(n)$ such that%
\begin{equation}
X_{\ell}(z_{0})\times X_{\ell^{\prime}}^{\hbar}(z_{0}^{\prime})=S(B_{X}%
^{n}(x_{0},\sqrt{\hbar})\times B_{P}^{n}(p_{0},\sqrt{\hbar})) \label{sball}%
\end{equation}
where $x_{0}$ and $p_{0}$ are defined by: $(x_{0},0)=S^{-1}z_{0}$ and
$(0,p_{0})=S^{-1}z_{0}^{\prime}$.
\end{proposition}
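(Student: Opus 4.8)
The plan is to reduce both statements to the canonical frame via Proposition \ref{Prop3}, standardise the shape of a state with one explicit symplectic matrix, and then carry the centers through by the intertwining relation $ST(w)=T(Sw)S$ built into $\operatorname*{ISp}(n)$. For part (i) I would first invoke Proposition \ref{Prop3} to obtain $S_{0}\in\operatorname*{Sp}(n)$ with $(\ell,\ell^{\prime})=S_{0}(\ell_{X},\ell_{P})$ and $X_{\ell}\times X_{\ell^{\prime}}^{\hbar}=S_{0}(X\times X^{\hbar})$, where $X=\{x:Ax\cdot x\leq\hbar\}\subset\ell_{X}$ for a symmetric positive-definite $A$ and $X^{\hbar}=\{p:A^{-1}p\cdot p\leq\hbar\}\subset\ell_{P}$ by (\ref{dualellh}). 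The core step is to put $L=A^{1/2}$ (the positive-definite square root, so $L^{T}L=A$) and check that $M_{L}=\begin{pmatrix}L^{-1}&0\\0&L^{T}\end{pmatrix}$ sends the fiducial product onto $X\times X^{\hbar}$: under $(x,p)\mapsto(L^{-1}x,L^{T}p)$ the image of $\{x\cdot x\leq\hbar\}$ is $\{Ax\cdot x\leq\hbar\}=X$ and the image of $\{p\cdot p\leq\hbar\}$ is $\{A^{-1}p\cdot p\leq\hbar\}=X^{\hbar}$. Since $M_{L}$ fixes the canonical frame, the composite $S=S_{0}M_{L}$ carries $B_{X}^{n}(\sqrt{\hbar})\times B_{P}^{n}(\sqrt{\hbar})$ onto $X_{\ell}\times X_{\ell^{\prime}}^{\hbar}$, which is exactly (\ref{symquantbis}); as the fiducial state is a single fixed element whose orbit is thereby all of $\operatorname*{Quant}\nolimits_{0}(n)$, transitivity of (\ref{symquant1}) follows.

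For part (ii) I would keep this $S$ and transport the centers through it. From part (i) we have $X_{\ell}=S\,B_{X}^{n}(\sqrt{\hbar})$, and the relation $T(z_{0})S=S\,T(S^{-1}z_{0})$ gives $X_{\ell}(z_{0})=T(z_{0})X_{\ell}=S\,T(S^{-1}z_{0})B_{X}^{n}(\sqrt{\hbar})$. Since $z_{0}\in\ell=S\ell_{X}$ forces $S^{-1}z_{0}=(x_{0},0)\in\ell_{X}$, we have $T(S^{-1}z_{0})B_{X}^{n}(\sqrt{\hbar})=B_{X}^{n}(x_{0},\sqrt{\hbar})$ and hence $X_{\ell}(z_{0})=S\,B_{X}^{n}(x_{0},\sqrt{\hbar})$; the momentum factor is handled identically with $(0,p_{0})=S^{-1}z_{0}^{\prime}$. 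Because $S$ is linear and maps $\ell_{X}$ onto $\ell$ and $\ell_{P}$ onto $\ell^{\prime}$ bijectively, it distributes over the direct-sum product, $S(U\times V)=(SU)\times(SV)$, which yields the representation (\ref{sball}).

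The step I expect to be the main obstacle is the transitivity assertion for the full $\operatorname*{Quant}(n)$. By (\ref{defsp}) the action sends the total center $z_{0}+z_{0}^{\prime}$ of a state to $S^{\prime}(z_{0}+z_{0}^{\prime})$, so the origin is fixed by every linear $S^{\prime}$ and a centered state can never be mapped to a non-centered one; the two families therefore sit in distinct orbits. I would establish the result orbit by orbit: the centered states reduce to part (i), while on the non-centered states one first uses that $\operatorname*{Sp}(n)$ acts transitively on $\mathbb{R}^{2n}\setminus\{0\}$ to align the total centers, and then the stabiliser of a fixed nonzero vector together with the shape-standardisation of part (i) to match the remaining data. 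Genuine transitivity across all of $\operatorname*{Quant}(n)$ is recovered only after enlarging $\operatorname*{Sp}(n)$ to the affine group $\operatorname*{ISp}(n)$, whose translations move the origin; reconciling the linear and translational parts consistently, while keeping the center $(x_{0},p_{0})=S^{-1}(z_{0}+z_{0}^{\prime})$ correctly tracked, is the delicate point of the argument.
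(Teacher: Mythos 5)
Your parts (i) and (ii) follow the paper's own route in all essentials. For (i) the paper likewise invokes Proposition \ref{Prop3} to reduce to the canonical frame, writes $X=A^{-1/2}(B_{X}^{n}(\sqrt{\hbar}))$, $X^{\hbar}=A^{1/2}(B_{P}^{n}(\sqrt{\hbar}))$, and composes with the symplectic dilation $M_{A^{1/2}}$ --- your $M_{L}$ with $L=A^{1/2}$; note that your convention $M_{L}=\operatorname{diag}(L^{-1},L^{T})=\operatorname{diag}(A^{-1/2},A^{1/2})$ is the internally consistent one, whereas the paper's displayed $M_{A^{1/2}}$ has the two blocks interchanged relative to its own formula $X=A^{-1/2}(B_{X}^{n}(\sqrt{\hbar}))$, a harmless slip. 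For (ii) the paper says only that it is ``proven in a similar way''; your transport of the centers through the intertwining relation $T(z_{0})S=S\,T(S^{-1}z_{0})$, together with the observation that $z_{0}\in\ell=S\ell_{X}$ forces $S^{-1}z_{0}=(x_{0},0)\in\ell_{X}$, is exactly the natural completion of that sketch and is correct; it delivers (\ref{sball}) with the same $S$ as in (\ref{symquantbis}).

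Your third paragraph calls for one correction. You are right that with the action (\ref{defsymp}) the word ``transitive'' cannot be meant in the strict single-orbit sense: centers transform linearly, so the centered states form an invariant subset, and the paper's argument in fact only establishes the representation formulas (\ref{symquantbis}) and (\ref{sball}) --- which is all that is used downstream (e.g.\ in Proposition \ref{Thm1}). But your proposed repair on the non-centered stratum would not go through: $\omega(z_{0},z_{0}^{\prime})$ is preserved by every linear $S^{\prime}$, and once the shape is standardised to $B_{X}^{n}(\sqrt{\hbar})\times B_{P}^{n}(\sqrt{\hbar})$ the residual stabiliser is only the diagonal $O(n)$ acting simultaneously on $(x_{0},p_{0})$, so the non-centered states split into a continuum of orbits labelled by the invariants $|x_{0}|$, $|p_{0}|$, $x_{0}\cdot p_{0}$; aligning the total centers $z_{0}+z_{0}^{\prime}$ and then using the stabiliser of that single vector cannot merge these orbits. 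The clean statement is the one you end with: genuine transitivity holds only for the affine group $\operatorname*{ISp}(n)$ --- translate by $-(z_{0}+z_{0}^{\prime})$ and apply part (i) --- while for $\operatorname*{Sp}(n)$ itself only the existence statements (\ref{symquantbis}) and (\ref{sball}) survive, and that is also all the paper's proof actually proves.
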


\begin{proof}
To prove part \textit{(i)} it is sufficient to show that there exists
$S\in\operatorname*{Sp}(n)$ such that (\ref{symquantbis}) holds. Let now
$S\in\operatorname*{Sp}(n)$ be such that $(\ell,\ell^{\prime})=S(\ell_{X}%
,\ell_{P})$ and $(X_{\ell},X_{\ell^{\prime}}^{\hbar})=S(X,X^{\hbar})$. There
exists a symmetric positive definite matrix $A$ such that ellipsoid $X$ is
$A^{-1/2}(B_{X}^{n}(\sqrt{\hbar}))$ hence $X^{\hbar}=A^{1/2}(B_{X}^{n}%
(\sqrt{\hbar}))$ and
\[
X\times X^{\hbar}=M_{A^{1/2}}(B_{X}^{n}(\sqrt{\hbar})\times B_{P}^{n}%
(\sqrt{\hbar}))
\]
where $M_{A^{1/2}}=%
\begin{pmatrix}
A^{1/2} & 0\\
0 & A^{-1/2}%
\end{pmatrix}
\in\operatorname*{Sp}(n)$ so that we have
\[
(X_{\ell},X_{\ell^{\prime}}^{\hbar})=SM_{A^{1/2}}(B_{X}^{n}(\sqrt{\hbar
})\times B_{P}^{n}(\sqrt{\hbar}))
\]
which was to be proven. Part \textit{(ii)} is proven in a similar way.
\end{proof}

\subsection{$\operatorname*{Quant}_{0}(n)$ as a homogeneous space}

Proposition \ref{Prop4} leads a topological identification of
$\operatorname*{Quant}_{0}(n)$ with the homogeneous space $\operatorname*{Sp}%
(n)/O(n)$. We begin by noting that the \textquotedblleft fiducial quantum
state\textquotedblright\ $B_{X}^{n}(\sqrt{\hbar})\times B_{P}^{n}(\sqrt{\hbar
})$ is invariant\ by the action of the subgroup $O(n)$ of $U(n)$ consisting of
all matrices $M_{H}=%
\begin{pmatrix}
H & 0\\
0 & H
\end{pmatrix}
$ with $H\in O(n,\mathbb{R})$.

\begin{remark}
The quotient $\operatorname*{Sp}(n)/U(n)$ (which is \textquotedblleft
smaller\textquotedblright\ than $\operatorname*{Sp}(n)/O(n)$) can be
identified with the set of Wigner transforms of Gaussian wavepackets
(\cite{Littlejohn}, formula (8.12)). This shows that $\operatorname*{Quant}%
\nolimits_{0}(n)$ contains more information than the Gaussian wavepackets
which we will study below.
\end{remark}

Let us state things in a more precise way. We first note that the
\textquotedblleft orthogonal symplectic group\textquotedblright\ $O(n)$ is the
largest subgroup of $\operatorname*{Sp}(n)$ such that%
\[
S(B_{X}^{n}(\sqrt{\hbar})\times B_{P}^{n}(\sqrt{\hbar}))=B_{X}^{n}(\sqrt
{\hbar})\times B_{P}^{n}(\sqrt{\hbar}),
\]
\textit{i.e.} $O(n)$ is the stabilizer (or isotropy subgroup) of the action of
$\operatorname*{Sp}(n)$ on $B_{X}^{n}(\sqrt{\hbar})\times B_{P}^{n}%
(\sqrt{\hbar})$ (we are identifying, as usual, $B_{X}^{n}(\sqrt{\hbar}%
)\times0\subset\ell_{X}$ with $B_{X}^{n}(\sqrt{\hbar})$ and $0\times B_{X}%
^{n}(\sqrt{\hbar})\times\subset\ell_{P}$ with $B_{P}^{n}(\sqrt{\hbar})$. To
see this it suffices to note that if $S(B_{X}^{n}(\sqrt{\hbar}))=B_{X}%
^{n}(\sqrt{\hbar})$ and similarly $S(B_{P}^{n}(\sqrt{\hbar}))=B_{P}^{n}%
(\sqrt{\hbar})$ then we must have, by homogeneity, $S\ell_{X}=\ell_{X}$ and
$S\ell_{P}=\ell_{P}$, hence we must have $S=%
\begin{pmatrix}
H & 0\\
0 & H
\end{pmatrix}
$ for some $H\in O(n)$. Since $\operatorname*{Sp}(n)$ is a classical Lie group
and $O(n)$ is a closed subgroup it follows from the theory of homogeneous
spaces that we have the identification%
\begin{equation}
\operatorname*{Quant}\nolimits_{0}(n)\equiv\operatorname*{Sp}%
(n)/O(n)\label{quanton}%
\end{equation}
which allows to define a topology on $\operatorname*{Quant}\nolimits_{0}(n)$
and hence a fiber bundle \cite{Steenrod}
\[
\mathcal{F}=(\operatorname*{Sp}(n),\operatorname*{Quant}\nolimits_{0}%
(n),\pi_{0}^{\operatorname*{Quant}},O(n))
\]
with projection%
\[
\pi_{0}^{\operatorname*{Quant}}:\operatorname*{Sp}(n)\longrightarrow
\operatorname*{Quant}\nolimits_{0}(n).
\]

\begin{remark}
The complex structure rotation $J:(x,p)\longmapsto(p,-x)$ also fixes the
Lagrangian product of two same size balls, but does not belong to the group
$O(n)$. On the analytical level $J$ plays the role of a Fourier transform.
\end{remark}

\section{$\operatorname*{Quant}(n)$ and Gaussian Wavepackets}

In this section we identify a subset of $\operatorname*{Quant}(n)$ with the
set of all Gaussian wavepackets.

\subsection{John and L\"{o}wner ellipsoids}

There is a vast literature on the L\"{o}wner and John ellipsoids of a convex
body; a classical reference is \cite{Ball}. Let $X$ be a convex body in any
Euclidean space $\mathbb{R}^{n}$. The L\"{o}wner ellipsoid $X_{\mathrm{L\ddot
{o}wner}}$ of $X$ \textit{is the unique ellipsoid in} $\mathbb{R}^{n}$
\textit{with minimum volume containing} $X$ and the John ellipsoid
$X_{\mathrm{John}}$ \textit{is the unique ellipsoid in} $\mathbb{R}^{n}$
\textit{with maximum volume contained in} $X$. If $A$ is an invertible linear
mapping then
\begin{equation}
(A(X))_{\mathrm{L\ddot{o}wner}}=A(X_{\mathrm{L\ddot{o}wner}})\text{ ,
}(A(X))_{\mathrm{John}}=A(X_{\mathrm{John}}) \label{JL1}%
\end{equation}

Not so surprisingly, if $X$ is a centrally symmetric convex body, then
$X_{\mathrm{John}}$ and $X_{\mathrm{L\ddot{o}wner}}$ are polar duals of each
other in the following sense \cite{ABMB}:%
\begin{equation}
(X_{\mathrm{John}})^{\hbar}=(X^{\hbar})_{\mathrm{L\ddot{o}wner}}\text{ \ ,
\ }(X_{\mathrm{L\ddot{o}wner}})^{\hbar}=(X^{\hbar})_{\mathrm{John}%
}.\label{JL2}%
\end{equation}
This property extends to Lagrangian polar duality. Let $(\ell,\ell^{\prime})$
be a Lagrangian frame and $(X_{\ell},X_{\ell^{\prime}}^{\hbar})$ a dual pair
of centered convex bodies. Then
\begin{equation}
((X_{\ell})_{\mathrm{John}})_{\ell^{\prime}}^{\hbar}=(X_{\ell^{\prime}}%
^{\hbar})_{\mathrm{L\ddot{o}wner}}\text{ \ , \ }((X_{\ell})_{\mathrm{L\ddot
{o}wner}})_{\ell^{\prime}}^{\hbar}=(X_{\ell^{\prime}}^{\hbar})_{\mathrm{John}%
}.\label{JL3}%
\end{equation}

The following particular case will be very important for what follows. We
denote $B_{X}^{n}(R)$ (resp. $B_{P}^{n}(R)$) the ball $|x|\leq R$ (resp.
$|p|\leq R$) in position (resp. momentum) space.

\begin{proposition}
\label{PropJohn}The John ellipsoid of $B_{X}^{n}(R)\times B_{P}^{n}(R)$ is
$B^{2n}(R)$. In particular%
\begin{equation}
\left(  B_{X}^{n}(\sqrt{\hbar})\times B_{P}^{n}(\sqrt{\hbar})\right)
_{\mathrm{John}}=B^{2n}(\sqrt{\hbar}). \label{BXP}%
\end{equation}

\end{proposition}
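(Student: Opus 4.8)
The plan is to combine the trivial inclusion of the round ball with the \emph{uniqueness} of the John ellipsoid and a symmetry argument. Write $K=B_{X}^{n}(R)\times B_{P}^{n}(R)=\{(x,p)\in\mathbb{R}^{2n}:|x|\leq R,\ |p|\leq R\}$. First I would record the easy inclusion $B^{2n}(R)\subseteq K$: if $|x|^{2}+|p|^{2}\leq R^{2}$ then a fortiori $|x|\leq R$ and $|p|\leq R$, so the round ball of radius $R$ is an inscribed ellipsoid of $K$. It remains to show that it is the inscribed ellipsoid of \emph{maximal} volume, i.e. that it equals $K_{\mathrm{John}}$.

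Next I would exploit the large symmetry group of $K$. The set $K$ is invariant under every map $(x,p)\mapsto(H_{1}x,H_{2}p)$ with $H_{1},H_{2}\in O(n,\mathbb{R})$ (these preserve $|x|$ and $|p|$ separately), under the sign reversal $z\mapsto -z$ (take $H_{1}=H_{2}=-I$), and under the swap $\sigma:(x,p)\mapsto(p,x)$, which merely interchanges the two defining constraints. By the equivariance relation (\ref{JL1}), every such linear symmetry $g$ satisfies $g(K_{\mathrm{John}})=(g(K))_{\mathrm{John}}=K_{\mathrm{John}}$, so by uniqueness $K_{\mathrm{John}}$ is invariant under all of them. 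In particular it is fixed by $z\mapsto -z$, hence centered at the origin, and I may write $K_{\mathrm{John}}=\{z:Gz\cdot z\leq 1\}$ for a symmetric positive-definite $G$ with blocks $G_{XX},G_{XP},G_{PX}=G_{XP}^{T},G_{PP}$.

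Then I would read off the shape of $K_{\mathrm{John}}$ from this invariance; this is the step that carries the real content. Invariance under the symmetry $(x,p)\mapsto(x,-p)$ (the case $H_{1}=I$, $H_{2}=-I$) changes the sign of the off-diagonal block while preserving $G$, hence forces $G_{XP}=0$. Invariance of $G_{XX}$ under conjugation by all of $O(n,\mathbb{R})$ forces $G_{XX}$ to be scalar: conjugating by a single coordinate sign-flip annihilates every off-diagonal entry, and conjugating by a coordinate transposition equalizes the diagonal entries, so $G_{XX}=\alpha I$, and likewise $G_{PP}=\beta I$. Finally invariance under $\sigma$ exchanges $\alpha$ and $\beta$, giving $\alpha=\beta$. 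Therefore $K_{\mathrm{John}}=\{z:\alpha|z|^{2}\leq 1\}$ is a centered round ball $B^{2n}(\rho)$ with $\rho=1/\sqrt{\alpha}$. The main obstacle is precisely this reduction, since it is the only place where one must argue rather than bookkeep; the orthogonal-invariance computation can be made elementary (signed permutations) so as not to invoke Schur's lemma.

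Having reduced to round balls, the conclusion is immediate. From $K_{\mathrm{John}}\subseteq K$ we get $\rho\leq R$, since the point $(x,0)$ with $|x|=\rho$ must obey $|x|\leq R$. Conversely $B^{2n}(R)\subseteq K$ is itself an inscribed ellipsoid, so by the maximality defining $K_{\mathrm{John}}$ its volume is at least that of $B^{2n}(R)$, forcing $\rho\geq R$. Hence $\rho=R$ and $K_{\mathrm{John}}=B^{2n}(R)$; specializing to $R=\sqrt{\hbar}$ yields (\ref{BXP}).
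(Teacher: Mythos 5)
Your proof is correct and takes essentially the same route as the paper's: both exploit the symmetries of $B_{X}^{n}(R)\times B_{P}^{n}(R)$ (sign reflections, the swap $(x,p)\mapsto(p,x)$, and orthogonal maps) together with the uniqueness and linear equivariance (\ref{JL1}) of the John ellipsoid to force $K_{\mathrm{John}}$ to be a centered round ball, and then pin down the radius via the inclusion $B^{2n}(R)\subseteq K$ and volume maximality. Your write-up merely makes explicit some steps the paper compresses, notably the elementary signed-permutation argument that a matrix commuting with all of $O(n,\mathbb{R})$ is scalar.
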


\begin{proof}
The inclusion
\begin{equation}
B^{2n}(R)\subset B_{X}^{n}(R)\times B_{P}^{n}(R) \label{incl}%
\end{equation}
is obvious, and we cannot have
\[
B^{2n}(R^{\prime})\subset B_{X}^{n}(R)\times B_{P}^{n}(R)
\]
if $R^{\prime}>R$. Assume now that the John ellipsoid $\Omega_{\mathrm{John}}$
of $\Omega=B_{X}^{n}(R)\times B_{P}^{n}(R)$ is defined by $Ax^{2}%
+Bxp+Cp^{2}\leq R^{2}$ where $A,C>0$ and $B$ are real $n\times n$ matrices.
Since $\Omega$ is invariant by the transformation $(x,p)\longmapsto(p,x)$ so
is $\Omega_{\mathrm{John}}$ and we must thus have $A=C$ and $B=B^{T}$.
Similarly, $\Omega$ being invariant by the partial reflection
$(x,p)\longmapsto(-x,p)$ we get $B=0$ so $\Omega_{\mathrm{John}}$ is defined
by $Ax^{2}+Ap^{2}\leq R^{2}$. We next observe that $\Omega$ and hence
$\Omega_{\mathrm{John}}$ are invariant under the symplectic transformations
$(x,p)\longmapsto(Hx,HP)$ where $H\in O(n,\mathbb{R})$ so we must have $AH=HA$
for all $H\in O(n,\mathbb{R})$, but this is only possible if $A=\lambda
I_{n\times n}$ for some $\lambda\in\mathbb{R}$. The John ellipsoid is thus of
the type $B^{2n}(R/\sqrt{\lambda})$ for some $\lambda\geq1$ and this concludes
the proof in view of (\ref{incl}) since the case $\lambda>R^{2}$ is excluded.
\end{proof}

\subsection{Gaussian wavepackets and their Wigner transforms}

Recall \cite{Wigner} that the Wigner transform of a square integrable function
$\psi$ on $\mathbb{R}_{x}^{n}$ is defined by the absolutely convergent
integral
\begin{equation}
W\psi(x,p)=\left(  \tfrac{1}{2\pi\hbar}\right)  ^{n}\int e^{-\frac{i}{\hbar
}py}\psi(x+\tfrac{1}{2}y)\psi^{\ast}(x-\tfrac{1}{2}y)d^{n}y. \label{wigtra}%
\end{equation}
The Wigner transform is a real function which can take negative values (except
when $\psi$ is a Gaussian). We recall the \textquotedblleft marginal
properties\textquotedblright\ of the Wigner transform: if both $\psi$ and its
Fourier transform
\[
\widehat{\psi}(p)=F\psi(p)=\left(  \tfrac{1}{2\pi\hbar}\right)  ^{n/2}\int
e^{-\frac{1}{\hbar}p\cdot x}\psi(x)d^{n}x
\]
are in $L^{1}(\mathbb{R}_{x}^{n})\cap L^{2}(\mathbb{R}_{x}^{n})$ then
\begin{align}
\int W\psi(x,p)d^{n}p  &  =|\psi(x)|^{2}\label{marg1}\\
\int W\psi(x,p)d^{n}x  &  =|F\psi(p)|^{2}. \label{marg2}%
\end{align}
These relations imply that
\begin{equation}
\int W\psi(x,p)d^{n}pd^{n}x=||\psi||_{L^{2}} \label{norm}%
\end{equation}
so that if $\psi$ is normalized to one then the integral of $W\psi$ over all
of phase space is equal to one. These properties motivate the interpretation
of the Wigner transform as a quasi-probability density.

A crucial fact is that the Wigner transform enjoys the property of symplectic
covariance \cite{Birk,Wigner}, that is, we have for every $S\in
\operatorname*{Sp}(n)$,
\begin{equation}
W\psi(S^{-1}z)=W(\widehat{S}\psi)(z) \label{sympco}%
\end{equation}
where $\widehat{S}$ is anyone of the two metaplectic operators covering $S$.
This property is instrumental in proving the symplectic covariance of Weyl
quantization, and implies that the metaplectic group acts transitively on the
Gaussian wavepackets we define below.

Following our work in \cite{go09} we introduced in \cite{blob} the notion of
\textquotedblleft quantum blob\textquotedblright. Their properties were
detailed in our \textit{Phys. Reps.} paper \cite{goluPR} with F. Luef. A
quantum blob is the image of a phase space ball $B^{2n}(z_{0},\sqrt{\hbar
}):|z-z_{0}|\leq\sqrt{\hbar}$ by some $S\in\operatorname*{Sp}(n)$. it can be
viewed as the smallest phase space unit compatible with the uncertainty
principle expressed in terms of variances and covariances (for a discussion of
the relevance of the use of standard deviations to formulate the uncertainty
relations see \cite{hiuf}). It turns out that there is a canonical
correspondence between quantum blobs and Gaussian wavepackets%
\begin{equation}
\psi_{AB}(x)=e^{i\gamma}\left(  \tfrac{1}{\pi\hbar}\right)  ^{n/4}(\det
A)^{1/4}e^{-\tfrac{1}{2\hbar}(A+iB)x\cdot x} \label{wpt}%
\end{equation}
and their displacements $\psi_{AB,z_{0}}=\widehat{T}(z_{0})\psi_{AB}$ by the
Heisenberg--Weyl operator $\widehat{T}(z_{0})$ \cite{Birk,Littlejohn}. In
(\ref{wpt}) $A$ and $B$ are real symmetric $n\times n$ matrices with $A$
positive definite and $\gamma\in R$ an arbitrary constant phase; we will not
care about the value of this phase factor since we will be dealing with the
properties of the quantum states $|\psi_{AB}\rangle$. When $A=I$ (the
identity), $B=0$, and $\gamma=0$ the function $\psi_{AB}$ reduces to the
\textquotedblleft fiducial coherent state\textquotedblright\ (we are using the
terminology in \cite{Littlejohn}):
\begin{equation}
\phi_{0}(x)=(\pi\hbar)^{-n/4}e^{-|x|^{2}/2\hbar}. \label{fid}%
\end{equation}
It turns out that every Gaussian wavepacket (\ref{wpt}) can be obtained from
the fiducial state by using metaplectic operators.

We will denote by $\operatorname*{Gauss}(n)$ the set of all Gaussian
wavepackets $\widehat{T}(z_{0})\psi_{AB}$, and by $\operatorname*{Gauss}%
_{0}(n)$ the subset consisting of centered wavepackets. One shows
\cite{Bas,Birk}, using the symplectic covariance formula (\ref{sympco}), that
the Wigner transform of $\widehat{T}(z_{0})\psi_{AB}$ is the phase space
Gaussian%
\begin{equation}
W\psi_{AB}(z)=(\pi\hbar)^{-n}e^{-\tfrac{1}{\hbar}G(z-z_{0})\cdot(z-z_{0}%
)}\label{phagauss}%
\end{equation}
where $G$ is the positive definite symmetric and symplectic $2n\times2n$
matrix
\begin{equation}
G=(S_{AB}S_{AB}^{T})^{-1}\text{ \ , \ }S_{AB}=%
\begin{pmatrix}
A^{-1/2} & 0\\
-BA^{-1/2} & A^{1/2}%
\end{pmatrix}
.\label{gaga}%
\end{equation}

Let us denote by $\operatorname*{QB}(n)$ the set of all quantum blobs
$S(B^{2n}(z_{0},\sqrt{\hbar}))$, $S\in\operatorname*{Sp}(n)$ and by
$\operatorname*{QB}_{0}(n)$ the subset consisting of all centered quantum
blobs $S(B^{2n}(\sqrt{\hbar}))$. Recall that $\operatorname*{Gauss}(n)$ is the
set of all Gaussian states $\widehat{T}(z_{0})\psi_{AB}$.

\begin{proposition}
(i) There is a bijective correspondence $\operatorname*{Gauss}%
(n)\longleftrightarrow\operatorname*{QB}(n)$; it is defined by%
\[
\widehat{T}(z_{0})\psi_{AB}\longrightarrow T(z_{0})S_{AB}(B^{2n}(\sqrt{\hbar
})).
\]
where $T(z_{0})$ is the phase space translation $z\longmapsto z+z_{0}$ and
$S_{AB}\in\operatorname*{Sp}(n)$ is defined by (\ref{phagauss}) and
(\ref{gaga}). (ii) The transitive action of $\operatorname*{Sp}(n)$ on the set
$\operatorname*{QB}_{0}(n)$ of centered quantum blobs induces a transitive
action of $\operatorname*{Mp}(n)$ on $\operatorname*{Gauss}_{0}(n).$ More
generally the transitive action of the inhomogeneous symplectic group
$\operatorname{ISp}(n)$ on $\operatorname*{QB}(n)$ induces a transitive action
of $\operatorname*{IMp}(n)$ on $\operatorname*{Gauss}(n)$.
\end{proposition}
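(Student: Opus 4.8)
The plan is to prove the two parts of the proposition in sequence, exploiting the explicit formulas already established for the Wigner transform of a Gaussian and the correspondence between quantum blobs and symplectic matrices.

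For part (i), the goal is to show that $\widehat{T}(z_0)\psi_{AB}\longmapsto T(z_0)S_{AB}(B^{2n}(\sqrt{\hbar}))$ is a well-defined bijection. First I would establish that the map is well defined by reading off the covariance ellipsoid from formula (\ref{phagauss}): the Wigner transform $W\psi_{AB}(z)=(\pi\hbar)^{-n}e^{-\frac{1}{\hbar}G(z-z_0)\cdot(z-z_0)}$ has level set $\{z:G(z-z_0)\cdot(z-z_0)\leq\hbar\}$, and since $G=(S_{AB}S_{AB}^T)^{-1}$ with $S_{AB}\in\operatorname*{Sp}(n)$ by (\ref{gaga}), this ellipsoid is exactly the quantum blob $T(z_0)S_{AB}(B^{2n}(\sqrt{\hbar}))$ in view of the representation (\ref{sxa}) of symplectic images of balls. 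Thus each Gaussian determines a unique quantum blob. For injectivity, I would argue that distinct pairs $(A,B,z_0)$ produce distinct matrices $G$ and hence distinct covariance ellipsoids, using that the phase $\gamma$ is immaterial (we work with states $|\psi_{AB}\rangle$) and that $G$ determines $A,B$ uniquely via (\ref{gaga}). For surjectivity, given any quantum blob $Q=T(z_0)S(B^{2n}(\sqrt{\hbar}))$ I would produce a Gaussian mapping to it: writing the associated positive-definite symplectic matrix $G=(SS^T)^{-1}$ in block form and appealing to the \emph{pre-Iwasawa} (or Williamson-type) factorization of symplectic matrices, one can always bring $S$ into the form $S_{AB}$ of (\ref{gaga}) modulo a symplectic rotation that fixes $B^{2n}(\sqrt{\hbar})$, thereby recovering $A,B$ from $G$.

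For part (ii), the transitivity statements follow from the covariance formula (\ref{sympco}) together with the bijection of part (i). I would first recall that $\operatorname*{Sp}(n)$ acts transitively on $\operatorname*{QB}_0(n)$ essentially by definition, since every centered quantum blob is $S(B^{2n}(\sqrt{\hbar}))$ for some $S\in\operatorname*{Sp}(n)$ and the fiducial blob $B^{2n}(\sqrt{\hbar})$ is reached by the identity. Transporting this through the bijection, the symplectic covariance $W\psi(S^{-1}z)=W(\widehat{S}\psi)(z)$ shows that applying $\widehat{S}\in\operatorname*{Mp}(n)$ to a Gaussian transforms its covariance ellipsoid exactly by the action of $S$ on the corresponding quantum blob. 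Hence transitivity of $\operatorname*{Sp}(n)$ on $\operatorname*{QB}_0(n)$ transfers to transitivity of $\operatorname*{Mp}(n)$ on $\operatorname*{Gauss}_0(n)$: any centered Gaussian can be obtained from the fiducial coherent state $\phi_0$ of (\ref{fid}) by a metaplectic operator. The inhomogeneous statement is handled the same way, replacing translations $T(z_0)$ on the blob side by Heisenberg displacement operators $\widehat{T}(z_0)$ on the wavepacket side, using that $\operatorname*{IMp}(n)$ covers $\operatorname*{ISp}(n)$ and that $\widehat{T}(z_0)$ shifts the Wigner transform by $z_0$.

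The main obstacle I anticipate is the surjectivity argument in part (i): one must verify that \emph{every} positive-definite symmetric symplectic matrix $G$ arises as $(S_{AB}S_{AB}^T)^{-1}$ for the specific lower-triangular form of $S_{AB}$ in (\ref{gaga}). This is a normal-form assertion for symplectic matrices, and the cleanest route is to invoke the factorization $S=S_{AB}\,U$ with $U$ a symplectic rotation: since $U(B^{2n}(\sqrt{\hbar}))=B^{2n}(\sqrt{\hbar})$, the rotational part disappears when passing to the quantum blob, so every centered quantum blob is represented by some $S_{AB}$. Verifying that this factorization exists and that the resulting $A$ (positive definite) and $B$ (symmetric) match the block structure of (\ref{gaga}) is the one genuinely computational point; everything else reduces to bookkeeping with the covariance formula.
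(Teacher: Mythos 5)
Your proof is correct, but the converse half of part (i) takes a genuinely different route from the paper. The forward direction coincides: both read the covariance ellipsoid off formula (\ref{phagauss}) and identify it with the quantum blob $T(z_{0})S_{AB}(B^{2n}(\sqrt{\hbar}))$. For surjectivity, however, the paper works analytically: given a blob $Q=T(z_{0})S(B^{2n}(\sqrt{\hbar}))$ it defines $\psi$ by prescribing its Wigner transform, then uses the symplectic covariance formula (\ref{sympco}) to get $W(\widehat{S}\psi)=W(\widehat{T}(S^{-1}z_{0})\phi_{0})$, concludes $\widehat{S}\psi=e^{i\gamma}\widehat{T}(S^{-1}z_{0})\phi_{0}$ from the fact that the Wigner transform determines a pure state up to a phase, and finally commutes $\widehat{S}$ past $\widehat{T}$ via the intertwining relation to land on $e^{i\gamma}\widehat{T}(z_{0})\psi_{AB}$. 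You instead argue at the matrix level, invoking the pre-Iwasawa factorization $S=S_{AB}U$ with $U\in U(n)$ fixing the ball, so that every centered blob is already of the form $S_{AB}(B^{2n}(\sqrt{\hbar}))$ and (\ref{phagauss}) hands you the Gaussian directly. Your route is more elementary and makes injectivity transparent (the block computation $S_{AB}S_{AB}^{T}=\begin{pmatrix}A^{-1} & -A^{-1}B\\ -BA^{-1} & A+BA^{-1}B\end{pmatrix}$ shows $G$ determines $(A,B)$, and the symmetry of $P_{XX}^{-1}P_{XP}$ for a symmetric symplectic $P$ gives the required symmetric $B$ in the factorization); the price is that the pre-Iwasawa decomposition must itself be proved or cited, which you correctly flag as the computational crux. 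The paper's route buys the metaplectic statement for free: it produces the operator $\widehat{T}(z_{0})\widehat{S}$ realizing the Gaussian, which is exactly what part (ii) needs --- and indeed for (ii) the paper simply defers to the literature, whereas you supply the natural covariance argument; note only that your step ``applying $\widehat{S}$ transforms the covariance ellipsoid by $S$'' still tacitly uses the same Moyal-type uniqueness of the Wigner transform (to conclude $\widehat{S}\psi_{AB}=e^{i\gamma}\psi_{A'B'}$ rather than merely an equality of ellipsoids), so that ingredient is unavoidable on either route.
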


\begin{proof}
(i) In view of the discussion above the Wigner transform associates to
$\widehat{T}(z_{0})\psi_{AB}$ the phase space ellipsoid
\[
Q=\{z:G_{AB}(z-z_{0})\cdot(z-z_{0})\leq\hbar\}
\]
where $G=(S_{AB}S_{AB}^{T})^{-1}$ hence $Q$ is the quantum blob $T(z_{0}%
)S_{AB}(B^{2n}(\sqrt{\hbar}))$. Let us show that, conversely, every quantum
blob is is obtained from a unique state $|\widehat{T}(z_{0})\psi_{AB}\rangle$.
Let $Q=T(z_{0})S(B^{2n}(\sqrt{\hbar}))$be a quantum blob, that is
\[
Q=\{z:G(z-z_{0})\cdot(z-z_{0})\leq\hbar\}\text{ \ },\text{ \ }G=(SS^{T}%
)^{-1}.
\]
To $Q$ we associate the function $\psi$ with Wigner transform
\[
W\psi(z)=(\pi\hbar)^{-n}e^{-\tfrac{1}{\hbar}G(z-z_{0})\cdot(z-z_{0})}.
\]
We have
\[
W\psi(S(z+S^{-1}z_{0}))=(\pi\hbar)^{-n}e^{-\tfrac{1}{\hbar}|z|^{2}}=W\phi
_{0}(z)
\]
hence, by the symplectic covariance formula (\ref{sympco}),%
\[
W(\widehat{S}\psi)(z)=W\phi_{0}(z-S^{-1}z_{0})=W(\widehat{T}(S^{-1}z_{0}%
)\phi_{0})(z)
\]
where $\widehat{S}\in\operatorname*{Mp}(n)$ covers $S$. It follows that we
have
\[
\widehat{S}\psi(x)=e^{i\gamma}\widehat{T}(S^{-1}z_{0})\phi_{0}(x)
\]
that is%
\[
\psi(x)=e^{i\gamma}\widehat{S}\widehat{T}(S^{-1}z_{0})\phi_{0}(x)=e^{i\gamma
}\widehat{T}(z_{0})\widehat{S}\phi_{0}(x)
\]
so that $\psi=e^{i\gamma}\widehat{T}(z_{0})\psi_{A,B}$ for some (uniquely
defined) matrices $A$ and $B$. (ii). see \cite{Birk}.
\end{proof}

For a detailed study of the correspondence $\operatorname*{Gauss}%
(n)\longleftrightarrow\operatorname*{QB}(n)$ see \cite{blob,goluPR}.

\subsection{Construction of a Quantum Gaussian Space}

Consider first the very simple case where $X$ is the ball $B_{X}^{n}%
(\sqrt{\hbar})$ whose polar dual is $X^{\hbar}=B_{P}^{n}(\sqrt{\hbar})$. The
corresponding elliptic quantum state is the product $B_{X}^{n}(\sqrt{\hbar
})\times B_{P}^{n}(\sqrt{\hbar})$. In view of Proposition \ref{PropJohn} the
John ellipsoid of this state is $B^{2n}(\sqrt{\hbar})$, and to the latter
corresponds the fiducial coherent state $\phi_{0}(x)=(\pi\hbar)^{-n/4}%
e^{-|x|^{2}/2\hbar}$. Slightly more generally, let $U$ be a symplectic
rotation and define a Lagrangian frame $(\ell,\ell^{\prime})$ by $\ell
=U\ell_{X}$ and $\ell^{\prime}=U\ell_{P}$. Identifying $B_{X}^{n}(\sqrt{\hbar
})$ with $B_{X}^{n}(\sqrt{\hbar})\times0\subset\ell_{X}$ the rotation $U$
takes this set to $U(B_{X}^{n}(\sqrt{\hbar})\times0)\subset\ell$ and,
similarly, $U(B_{P}^{n}(\sqrt{\hbar})\times0)\subset\ell^{\prime}$. The state
$B_{X}^{n}(\sqrt{\hbar})\times B_{P}^{n}(\sqrt{\hbar})$ is replaced with
$U(B_{X}^{n}(\sqrt{\hbar})\times B_{P}^{n}(\sqrt{\hbar}))$ whose John
ellipsoid is, by rotational symmetry,
\[
\left(  U(B_{X}^{n}(\sqrt{\hbar})\times B_{P}^{n}(\sqrt{\hbar}))\right)
_{\mathrm{John}}=U(B^{2n}(\sqrt{\hbar}))=B^{2n}(\sqrt{\hbar})
\]
in view of the linear transformation property (\ref{JL1}). The states
$B_{X}^{n}(\sqrt{\hbar})\times B_{P}^{n}(\sqrt{\hbar}$ and $U(B_{X}^{n}%
(\sqrt{\hbar})\times B_{P}^{n}(\sqrt{\hbar}))$ thus have the \emph{same} John
ellipsoid, and to both states thus corresponds the fiducial Gaussian
wavepacket $\phi_{0}$. From the Wigner transform point of view, this property
just reflects the rotational invariance of $\phi_{0}$: we have
\[
W\phi_{0}(Uz)=(\pi\hbar)^{-n}e^{-\frac{1}{\hbar}Uz\cdot Uz}=(\pi\hbar
)^{-n}e^{-\frac{1}{\hbar}z\cdot z}=W\phi_{0}(z).
\]
Consider next the slightly more general case where $X$ is the ellipsoid%
\[
X=\{x:Ax\cdot x\leq\hbar\}=A^{-1/2}(B_{X}^{n}(\sqrt{\hbar}))
\]
with $A=A^{T}>0$; hence
\[
X^{\hbar}=\{p:A^{-1}p\cdot p\leq\hbar\}=A^{1/2}(B_{P}^{n}(\sqrt{\hbar}))
\]
and the corresponding quantum state is then
\[
A^{-1/2}(B_{X}^{n}(\sqrt{\hbar}))\times A^{1/2}(B_{P}^{n}(\sqrt{\hbar
}))=M_{A^{1/2}}(B_{X}^{n}(\sqrt{\hbar})\times B_{P}^{n}(\sqrt{\hbar}))
\]
where $M_{A^{1/2}}=%
\begin{pmatrix}
A^{1/2} & 0\\
0 & A^{-1/2}%
\end{pmatrix}
$ is a symplectic dilation. Using again (\ref{JL1}) the John ellipsoid of this
state is
\[
(X\times X^{\hbar})_{\mathrm{John}}=M_{A^{1/2}}(B^{2n}(\sqrt{\hbar}))
\]
and to the latter corresponds the function with Wigner transform
\[
W\psi(z)=(\pi\hbar)^{-n}\exp-\left[  \frac{1}{\hbar}(Ax\cdot x+A^{-1}p\cdot
p)\right]
\]
and hence, up to an irrelevant constant phase $e^{i\gamma}$,%
\[
\psi(x)=\psi_{A,0}(x)=\left(  \tfrac{1}{\pi\hbar}\right)  ^{n/4}(\det
A)^{1/4}e^{-\tfrac{1}{2\hbar}Ax\cdot x}.
\]

These examples suggest that there is a deeper underlying structure relating
elliptic quantum states to Gaussian wavefunctions. To study this relation we
begin by defining an equivalence relation on $\operatorname*{Quant}_{0}(n)$:
We will say that two states $X_{\ell_{1}}\times X_{\ell_{1}^{\prime}}^{\hbar}$
and $X_{\ell_{2}}\times X_{\ell_{2}^{\prime}}^{\hbar}$ are unitarily
equivalent and write
\[
X_{\ell_{1}}\times X_{\ell_{1}^{\prime}}^{\hbar}\overset{U(n)}{\sim}%
X_{\ell_{2}}\times X_{\ell_{2}^{\prime}}^{\hbar}%
\]
if there exists a symplectic rotation $U\in U(n)$ such that $(\ell_{1}%
,\ell_{1}^{\prime})=U(\ell_{2},\ell_{2}^{\prime})$ and
\[
X_{\ell_{1}}\times X_{\ell_{1}^{\prime}}^{\hbar}=U(X_{\ell_{2}}\times
X_{\ell_{2}^{\prime}}^{\hbar}).
\]
Since $U(n)$ is a group the relation $\overset{U(n)}{\sim}$ enjoys the
properties of reflexivity, symmetry, and transitivity, and is thus indeed an
equivalence relation. We denote by $\widetilde{X_{\ell}\times X_{\ell^{\prime
}}^{\hbar}}$ the equivalence class of the state $X_{\ell}\times X_{\ell
^{\prime}}^{\hbar}$ for this relation and by $\operatorname*{Quant}%
\nolimits_{0}(n)/U(n)$ the set of all such equivalence classes. Recall
(formula (\ref{quanton})) that we have identified $\operatorname*{Quant}%
\nolimits_{0}(n)$ with $\operatorname*{Sp}(n)/O(n)$. Following result
identifies $\operatorname*{Gauss}\nolimits_{0}(n)$ with $\operatorname*{Quant}%
\nolimits_{0}(n)/U(n)$:

\begin{proposition}
\label{Thm1}There is a canonical identification
\begin{equation}
\operatorname*{Gauss}\nolimits_{0}(n)\equiv\operatorname*{Quant}%
\nolimits_{0}(n)/U(n) \label{gausson}%
\end{equation}
between the set of centered Gaussian wavepackets $\psi_{AB}$ and the
equivalence classes $\widetilde{X_{\ell}\times X_{\ell^{\prime}}^{\hbar}}$ of
centered elliptic quantum states. More generally we have an identification%
\begin{equation}
\operatorname*{Gauss}(n)\equiv\operatorname*{Quant}(n)/U(n) \label{gaussQuant}%
\end{equation}

\end{proposition}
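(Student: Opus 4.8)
The plan is to construct the identification from the \emph{John ellipsoid} map together with the quantum-blob/Gaussian dictionary already established. Given a centered state $X_{\ell}\times X_{\ell'}^{\hbar}\in\operatorname*{Quant}\nolimits_{0}(n)$, Proposition \ref{Prop4} supplies $S\in\operatorname*{Sp}(n)$ with $X_{\ell}\times X_{\ell'}^{\hbar}=S(B_{X}^{n}(\sqrt{\hbar})\times B_{P}^{n}(\sqrt{\hbar}))$; by Proposition \ref{PropJohn} and the linear covariance (\ref{JL1}) of John ellipsoids, the John ellipsoid of this state is the centered quantum blob $S(B^{2n}(\sqrt{\hbar}))\in\operatorname*{QB}\nolimits_{0}(n)$. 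Composing with the bijection $\operatorname*{QB}\nolimits_{0}(n)\leftrightarrow\operatorname*{Gauss}\nolimits_{0}(n)$ of the blob/Gaussian correspondence produces a centered Gaussian $\psi_{AB}$ (its phase being irrelevant), hence a map $\Phi:\operatorname*{Quant}\nolimits_{0}(n)\to\operatorname*{Gauss}\nolimits_{0}(n)$. Surjectivity is immediate, since every centered blob $S(B^{2n}(\sqrt{\hbar}))$ already arises as the John ellipsoid of the state $S(B_{X}^{n}(\sqrt{\hbar})\times B_{P}^{n}(\sqrt{\hbar}))$.

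The second step is to check that $\Phi$ descends to the quotient, i.e. that $X_{1}\overset{U(n)}{\sim}X_{2}$ forces $\Phi(X_{1})=\Phi(X_{2})$. The model computation is the fiducial one carried out just before the statement: since any $U\in U(n)\subset O(2n,\mathbb{R})$ fixes the ball $B^{2n}(\sqrt{\hbar})$, the fiducial state and its rotate share the John ellipsoid $B^{2n}(\sqrt{\hbar})$ and hence the same Gaussian $\phi_{0}$, which is the geometric shadow of the invariance $W\phi_{0}(Uz)=W\phi_{0}(z)$. I would promote this to arbitrary states by reducing to the fiducial case through Proposition \ref{Prop4} and invoking the symplectic covariance (\ref{sympco}) of the Wigner transform.

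For injectivity on the quotient one argues conversely: if $\Phi(X_{1})=\Phi(X_{2})$, then by uniqueness in the blob/Gaussian correspondence the two states have \emph{equal} John ellipsoids, and one must extract a symplectic rotation witnessing $X_{1}\overset{U(n)}{\sim}X_{2}$. Writing $X_{i}=S_{i}(B_{X}^{n}(\sqrt{\hbar})\times B_{P}^{n}(\sqrt{\hbar}))$, equality of John ellipsoids reads $S_{1}U(n)=S_{2}U(n)$ in $\operatorname*{Sp}(n)/U(n)$, and the job is to convert this into the relation defining $\operatorname*{Quant}\nolimits_{0}(n)/U(n)$. The general (non-centered) identification (\ref{gaussQuant}) should then follow from the centered case by equivariance under phase-space translations: the John ellipsoid of $X_{\ell}(z_{0})\times X_{\ell'}^{\hbar}(z_{0}')$ is a displaced blob, and the dictionary intertwines $T(\cdot)$ with the Heisenberg operator $\widehat{T}(\cdot)$, so the transitive actions of $\operatorname*{ISp}(n)$ and $\operatorname*{IMp}(n)$ match the displaced classes with the displaced Gaussians $\widehat{T}(z_{0})\psi_{AB}$.

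The step I expect to be the main obstacle is exactly the descent-plus-injectivity on the quotient, namely the precise matching of $U(n)$-equivalence classes with the fibers of $\Phi$, i.e. the identification of $\operatorname*{Quant}\nolimits_{0}(n)/U(n)$ with $\operatorname*{Sp}(n)/U(n)$. The subtlety is structural: after identifying $\operatorname*{Quant}\nolimits_{0}(n)\equiv\operatorname*{Sp}(n)/O(n)$, the fibers of the John map are the right $U(n)$-cosets, whereas the $U(n)$-equivalence classes are the left $U(n)$-orbits, so reconciling them amounts to controlling which symplectic rotations preserve a given John ellipsoid, equivalently the stabilizers $U(n)\cap (S\,O(n)\,S^{-1})$. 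A comparison of dimensions shows this matching is transparent for $n=1$ and is the place where the genuine work concentrates for larger $n$; this is where the weight of the proof lies.
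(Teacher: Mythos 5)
You have reproduced the paper's construction faithfully in its first half: the map $\Phi$ is built, exactly as in the paper, by sending $X_{\ell}\times X_{\ell'}^{\hbar}=S(B_{X}^{n}(\sqrt{\hbar})\times B_{P}^{n}(\sqrt{\hbar}))$ (Proposition \ref{Prop4}) to its John ellipsoid $S(B^{2n}(\sqrt{\hbar}))$ (Proposition \ref{PropJohn} plus (\ref{JL1})) and then through the blob/Gaussian dictionary; surjectivity and the translation argument for (\ref{gaussQuant}) also match the paper. But the step you explicitly leave open --- descent to the quotient and injectivity there --- is the entire substance of the proposition, so this is a genuine gap, not a completed alternative route. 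Moreover, the mechanism you sketch for descent (``reduce to the fiducial case and invoke symplectic covariance (\ref{sympco})'') would prove the \emph{opposite} if the equivalence is the literal left action $X\mapsto U(X)$: writing $X=S(B_{X}^{n}(\sqrt{\hbar})\times B_{P}^{n}(\sqrt{\hbar}))$ one gets $G'=\bigl((US)(US)^{T}\bigr)^{-1}=UGU^{T}$, so by (\ref{sympco}) the associated Gaussian is replaced by its metaplectic transform $\widehat{U}\psi_{AB}$, not fixed. Already for $n=1$, rotating a non-square parallelogram rotates its John ellipse and changes the Gaussian --- so your claim that the matching is ``transparent for $n=1$'' by dimension count is also wrong; the fiducial computation works only because $B^{2n}(\sqrt{\hbar})$ happens to be rotation invariant.

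What the paper actually does at this point is much simpler than the stabilizer analysis $U(n)\cap\bigl(S\,O(n)\,S^{-1}\bigr)$ you anticipate: it changes the \emph{representative}, $S\mapsto SU$, and uses $(SU)(SU)^{T}=SS^{T}$ (one line, from $UU^{T}=I$), so that $G=(SS^{T})^{-1}$, hence $\psi_{AB}$, depends exactly on the right coset $S\,U(n)$; the fibers of $\Phi$ are these cosets and $\operatorname*{Quant}\nolimits_{0}(n)/U(n)\equiv\operatorname*{Sp}(n)/U(n)\equiv\operatorname*{Gauss}\nolimits_{0}(n)$. In other words, the proof construes unitary equivalence of $S_{1}(B_{X}^{n}(\sqrt{\hbar})\times B_{P}^{n}(\sqrt{\hbar}))$ and $S_{2}(B_{X}^{n}(\sqrt{\hbar})\times B_{P}^{n}(\sqrt{\hbar}))$ as $S_{1}U(n)=S_{2}U(n)$ (the rotation acts in the fiducial frame, i.e.\ as the conjugate $S U S^{-1}$ on the state), and \emph{not} as the left action that the displayed definition of $\overset{U(n)}{\sim}$ literally states. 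Your left-versus-right diagnosis therefore pinpoints a real wrinkle --- for the literal left action no stabilizer computation can rescue the statement, since the orbit space $U(n)\backslash\operatorname*{Sp}(n)/O(n)$ has generic dimension $n(n+1)-n(n-1)/2$, strictly smaller than $\dim\operatorname*{Gauss}\nolimits_{0}(n)=n(n+1)$ for $n\geq2$ --- but the resolution is definitional and trivial, not the ``genuine work'' where you expected the weight of the proof to lie. Had you tested which of the two $U(n)$-actions leaves $SS^{T}$ invariant, the argument would have closed immediately.
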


\begin{proof}
Let $\psi_{A,B}\in\operatorname*{Gauss}\nolimits_{0}(n)$ be a Gaussian
wavepacket and
\[
W\psi_{AB}(z)=(\pi\hbar)^{-n}e^{-\frac{1}{\hbar}Gz\cdot z}\text{ , }%
G=(SS^{T})^{-1}%
\]
its Wigner transform. The ellipsoid $\{z:Gz\cdot z\leq\hbar\}$ is the quantum
blob $Q=S(B^{2n}(\sqrt{\hbar}))$, and in view of Proposition \ref{PropJohn}
the latter is the John ellipsoid of the state%
\begin{gather*}
X_{\ell}\times X_{\ell^{\prime}}^{\hbar}=S(B_{X}^{n}(\sqrt{\hbar})\times
B_{P}^{n}(\sqrt{\hbar})),\text{ }\\
\text{\ }\ell=S\ell_{X}\text{, }\ell^{\prime}=S\ell_{P}.
\end{gather*}
If $S^{\prime}\in\operatorname*{Sp}(n)$ is another symplectic matrix such that
$G=(S^{\prime}(S^{\prime})^{T})^{-1}$ then $S^{\prime}=SU$ for some symplectic
rotation $U\in U(n)$ and hence $S^{\prime}(B^{2n}(\sqrt{\hbar}))=S(B^{2n}%
(\sqrt{\hbar}))$ so that $Q$ is also the John ellipsoid of the state%
\begin{gather*}
X_{\ell_{1}}\times X_{\ell_{1}^{\prime}}^{\hbar}=S^{\prime}(B_{X}^{n}%
(\sqrt{\hbar})\times B_{P}^{n}(\sqrt{\hbar})),\\
\text{\ }\ell_{1}=SU\ell_{X}\text{, }\ell_{1}^{\prime}=SU\ell_{P}.
\end{gather*}
Conversely, let $X_{\ell}\times X_{\ell^{\prime}}^{\hbar}$ be a centered
elliptic quantum state and choose $S\in\operatorname*{Sp}(n)$ such that
$(\ell,\ell^{\prime})=S(\ell_{X},\ell_{P})$ and%
\begin{equation}
X_{\ell}\times X_{\ell^{\prime}}^{\hbar}=S(B_{X}^{n}(\sqrt{\hbar})\times
B_{P}^{n}(\sqrt{\hbar}))\label{xlxl}%
\end{equation}
(Proposition \ref{Prop4}). In view of Proposition \ref{PropJohn} the John
ellipsoid of $X_{\ell}\times X_{\ell^{\prime}}^{\hbar}$ is the quantum blob
$Q=S(B^{2n}(\sqrt{\hbar}))$, hence to $X_{\ell}\times X_{\ell^{\prime}}%
^{\hbar}$ corresponds the generalized Gaussian $\psi_{AB}$ with Wigner
transform%
\[
W\psi_{AB}(z)=(\pi\hbar)^{-n}e^{-\frac{1}{\hbar}Gz\cdot z}\text{ , }%
G=(SS^{T})^{-1}.
\]
We may replace $X_{\ell}\times X_{\ell^{\prime}}^{\hbar}$ with%
\[
X_{\ell_{1}}\times X_{\ell_{1}^{\prime}}^{\hbar}=S^{\prime}U(B_{X}^{n}%
(\sqrt{\hbar})\times B_{P}^{n}(\sqrt{\hbar})),\text{ \ }\ell_{1}=SU\ell
_{X}\text{, }\ell_{1}^{\prime}=SU\ell_{P}.
\]
with $U\in U(n)$ without altering $G$, hence $W\psi_{AB}$ (and thus $\psi
_{AB}$) only depends on the equivalence class $\widetilde{X_{\ell}\times
X_{\ell^{\prime}}^{\hbar}}$. The extension of (\ref{gausson}) to formula
(\ref{gaussQuant}) is straightforward.
\end{proof}

\section{Perspectives for a Generalization}

Sofar we have been considering Lagrangian products of ellipsoids. The next
--and fundamental!-- step would be to generalize our constructions to products
$X\times X^{\hbar}$ (or, more generally, $X_{\ell}\times X_{\ell^{\prime}%
}^{\hbar}$) where $X$ or $X_{\ell}$ is not an ellipsoid, but an arbitrary
convex set, leading to a non-Gaussian quantum state. It is clear why tis
problem has such an overwhelming importance since it opens the door to a
general geometric theory of quantum states. This problem will be addressed in
forthcoming work: let us just outline here some of the difficulties inherent
to such an extension of our theory.  So, we would like to construct a
generalization of $\operatorname*{Quant}(n)$ where the Lagrangian quantum
states are represented by arbitrary convex sets. The first mathematical
difficulty that occurs is the determination of the point with respect to which
the polar dual should be calculated. Let in fact $X(x_{0})$ be an arbitrary
convex body in $\ell_{X}=\mathbb{R}_{x}^{n}$; by definition its centroid (or
barycenter) is
\begin{equation}
x_{0}=\frac{1}{\operatorname*{Vol}\nolimits_{n}(X)}\int_{X}x_{1}dx_{1}%
+\cdot\cdot\cdot+x_{n}dx_{n}=0.\label{centroid}%
\end{equation}
It is easily verified that if $X$ is an ellipsoid, then the centroid coincides
with its center in the usual sense. To define the polar dual of $X(x_{0})$ one
is tempted to use the same procedure as for ellipsoids and to define
$X(x_{0})^{\hbar}$ as the dual of the centered convex body $X=T(-x_{0}%
)X(x_{0})$. However this is not the good choice. Here is why: when we defined
the polar of an ellipsoid by translating it is centered at the origin it turns
out that the Blaschke--Santal\'{o} product $\operatorname*{Vol}\nolimits_{n}%
(X(x_{0}))\operatorname*{Vol}\nolimits_{n}(X^{\hbar}(x_{0}))$ attains the
value $(\operatorname*{Vol}\nolimits_{n}B^{n}(\sqrt{\hbar})^{2}$. The
difficulty comes from the fact that in the general case of arbitrary convex
body we need to choose the correct center with respect to which the polarity
is defined since there is no privileged \textquotedblleft
center\textquotedblright; different choices may lead to polar duals with very
different volumes (see Example \ref{example}). \ Santal\'{o} proved in
\cite{Santalo} the following remarkable result: there exists a \emph{unique}
interior point $x_{\mathrm{S}}$ of $X$ (the \textquotedblleft Santal\'{o}
point of $X$\textquotedblright) such that the polar dual $X^{\hbar
}(x_{\mathrm{S}})=(X-x_{\mathrm{S}})^{\hbar}$ has centroid $\overline{p}=0$
and its volume $\operatorname*{Vol}\nolimits_{n}(X^{\hbar}(x_{\mathrm{S}}))$
is \textit{minimal} for all possible interior points $x_{0}$:
\begin{equation}
\operatorname*{Vol}\nolimits_{n}(X)\operatorname*{Vol}\nolimits_{n}(X^{\hbar
}(x_{\mathrm{S}}))\leq(\operatorname*{Vol}\nolimits_{n}B^{n}(\sqrt{\hbar}%
)^{2}\label{sant2}%
\end{equation}
with equality if and only if $X$ is an ellipsoid. We note that the practical
determination of the Santal\'{o} point is in general difficult and one has to
use ad hoc methods in each particular case. See \cite{arkami} for a discussion
of this issue.

Having in mind that the polar dual is calculated with respect to the
Santal\'{o} point (not the centroid!) we can define the associated canonical
Lagrangian quantum state exactly as follows let $(\ell_{X},\ell_{P}%
)\in\operatorname*{Lag}_{0}^{2}(n)$, $be$ the canonical Lagrangian frame and
$X(x_{\mathrm{S}})\in\operatorname*{Conv}(\ell)$ a convex body carried by
$\ell_{X}$ and with Santal\'{o} point $x_{\mathrm{S}}$. The associated
Lagrangian state is then
\[
X(x_{\mathrm{S}})\times(X(x_{\mathrm{S}})-x_{\mathrm{S}})^{\hbar}%
+p_{0})=X(x_{\mathrm{S}})\times X^{\hbar}(p_{0})
\]
and we again have a fiber bundle structure%
\[
\pi:\operatorname*{Quant}\nolimits_{0}(n)\longrightarrow\operatorname*{Conv}%
(\ell).
\]
The study of the latter is less straightforward than in the case of
ellipsoids, and will be done in a forthcoming work. We also notice that we can
associated to every state an ellipsoid using the John ellipsoid method, but
the role played by the latter is unclear (it is not quite obvious that it
should be a quantum blob; if it were the case it could correspond to the
covariance matrix of the state). At this point one might want to use tie
theory of the Minkowski functional to give a geometric study; this leas us to
consider non-linear problems. All this is also related to the powerful notion
of symplectic capacity, which we discussed in \cite{BSM} following the ideas
in \cite{arkami,armios08,arkaos13} All these questions are fascinating ad
answers might lead to a geometric reformulation of quantum mechanics where the
notion of polar duality in a sense replaces the usual uncertainty principle.
We will come back with answers in a near future.

\begin{acknowledgement}
Maurice de Gosson has been financed by the Grant P 33447 N of the Austrian
Science Fund FWF.
\end{acknowledgement}

\begin{itemize}
\item The author declares that there are no conflicts of interests in this work;

\item No experimentation on human beings or animals has been performed for
this work;

\item No numerical or experimental data have been produced or used.
\end{itemize}

\end{document}